\begin{document}

\title{Transposes, L-Eigenvalues and Invariants \\ of Third Order Tensors}
\author{Liqun Qi\footnote{%
    Department of Applied Mathematics, The Hong Kong Polytechnic University,
    Hung Hom, Kowloon, Hong Kong ({\tt maqilq@polyu.edu.hk}).
    This author's work was partially supported by the Hong Kong Research Grant Council
    (Grant No. PolyU  501913, 15302114, 15300715 and 15301716).}
    }

\date{\today}
\maketitle

\begin{abstract}  Third order tensors have wide applications in mechanics, physics and engineering.  The most famous and useful third order tensor is the piezoelectric tensor, which plays a key role in the piezoelectric effect, first discovered by Curie brothers.   On the other hand, the Levi-Civita tensor is famous in tensor calculus.    In this paper, we study third order tensors and (third order) hypermatrices systematically, by regarding a third order tensor as a linear operator which transforms a second order tensor into a first order tensor, or a first order tensor into a second order tensor.   For a third order tensor, we define its transpose, kernel tensor and L-eigenvalues.  Here, ``L'' is named after Levi-Civita.  The transpose of a third order tensor is uniquely defined.  In particular, the transpose of the piezoelectric tensor is the inverse piezoelectric tensor (the electrostriction tensor).
The kernel tensor of a third order tensor is a second order positive semi-definite symmetric tensor, which is the product of that third order tensor and its transpose.      We define L-eigenvalues, singular values, C-eigenvalues and Z-eigenvalues for a third order tensor.   They are all invariants of that third order tensor.     For a third order partially symmetric tensor, we give its eigenvector decomposition.  The piezoelectric tensor, the inverse piezoelectric tensor, and third order symmetric tensors are third order partially symmetric tensors.
We make a conjecture a third order symmetric tensor has seven independent invariants.  We raise the questions on how may independent invariants a third order tensor, a third order partially symmetric tensor and a third order cyclically symmetric tensor may have respectively.

  \textbf{Key words.} third order tensor, piezoelectric tensor, inverse piezoelectric tensor, L-eigenvalue, eigenvector decomposition, principal invariant.
\end{abstract}

\newtheorem{Theorem}{Theorem}[section]
\newtheorem{Definition}[Theorem]{Definition}
\newtheorem{Lemma}[Theorem]{Lemma}
\newtheorem{Corollary}[Theorem]{Corollary}
\newtheorem{Proposition}[Theorem]{Proposition}
\newtheorem{Conjecture}[Theorem]{Conjecture}
\newtheorem{Question}[Theorem]{Question}

\renewcommand{\hat}[1]{\widehat{#1}}
\renewcommand{\tilde}[1]{\widetilde{#1}}
\renewcommand{\bar}[1]{\overline{#1}}
\newcommand{\REAL}{\mathbb{R}}
\newcommand{\COMPLEX}{\mathbb{C}}
\newcommand{\SPHERE}{\mathbb{S}^2}
\newcommand{\diff}{\,\mathrm{d}}
\newcommand{\st}{\mathrm{s.t.}}
\newcommand{\T}{\top}
\newcommand{\vt}[1]{{\bf #1}}
\newcommand{\x}{{\vt{x}}}
\newcommand{\y}{{\vt{y}}}
\newcommand{\z}{{\vt{z}}}
\newcommand{\w}{{\vt{w}}}
\newcommand{\e}{{\vt{e}}}
\newcommand{\g}{{\vt{g}}}
\newcommand{\0}{{\vt{0}}}
\newcommand{\Ten}{\mathcal{T}}
\newcommand{\HH}{\mathbb{H}}
\newcommand{\A}{\mathcal{A}}
\newcommand{\B}{\mathcal{B}}
\newcommand{\C}{\mathcal{C}}
\newcommand{\D}{\mathcal{D}}
\newcommand{\E}{\mathcal{E}}
\newcommand{\OOO}{\mathcal{O}}
\newcommand{\U}{{\bf{U}}}
\newcommand{\V}{{\bf{V}}}
\newcommand{\W}{{\bf{W}}}
\newcommand{\I}{{\bf{I}}}
\newcommand{\OO}{{\bf{O}}}
\newcommand{\RESULTANT}{\mathrm{Res}}

\newpage
\section{Introduction}

Third order tensors have wide applications in mechanics, physics and engineering.  The most famous and useful third order tensor is the piezoelectric tensor \cite{CJQ-17, CC-80, Ha-07, KPG-08, Lo-89, ZTP-13}, which plays a key role in the piezoelectric effect, first discovered by Curie brothers \cite{CC-80}.   Other third order tensors in physics and engineering include the inverse piezoelectric tensor (the electrostriction tensor) \cite{Ha-07}, third order symmetric traceless-tensors in liquid crystal study \cite{BPC-02, CQV-17, GV-16, Fe-95, LR-02, Vi-15}, third order susceptibility tensors in nonlinear optics study \cite{Je-70, KGTRU-04}, and the Hall tensor \cite{Ha-07}.   On the other hand, the Levi-Civita tensor is famous in tensor calculus \cite{Ha-07, LCE-10}.

Under an orthonormal basis, a third order tensor is represented by a hypermatrix $(a_{ijk})$.   While matrix theory and second order tensor theory are well-developed, the theory on third order tensors and hypermatrices is still less developed.    Some properties of hypermatrices are not preserved under orthonormal transformations.   Thus, these properties are not tensor properties.  For example, the nonnegativity property that $a_{ijk} \ge 0$ is not preserved under orthonormal transformations.    Thus, it is not a tensor property.   Recently, a  comprehensive Perron-Frobenius theory has been developed for nonnegative hypermatrices and has applications in spectral hypergraph theory and higher order Markov chains \cite{QL-17}.  However, since it is not a tensor property, it may be useless or less useful in mechanics and some areas of physics.

Hence, while it is a common practice to use a hypermatrix symbol $a_{ijk}$ to denote a third order tensor in physics and mechanics, a systematic study on third order tensors and hypermatrices, to distinguish them clearly, is necessary.

Essentially, a second order tensor is a linear operator transforming a first order tensor (vector) into another first order tensor (vector), while a third order tensor is a linear operator which transforms a second order tensor into a first order tensor, or a first order tensor into a second order tensor.   For example, the piezoelectric effect can be displayed as
$$\x = \A\V,$$
where $\x$ is the vector of the electric polarization, $\A$ is the piezoelectric tensor, and $\V$ is the mechanical stress tensor \cite{Ha-07}.   Thus, the third order tensor $\A$ transforms a second order tensor $\V$ to a first order tensor $\x$.    On the other hand, the inverse piezoelectric effect can be displayed as
$$\U = \B\z,$$
where $\U$ is the deformation, $\B$ is the inverse piezoelectric tensor, and $\z$ is the electric field strength \cite{Ha-07}.   Thus, the third order tensor $\B$ transforms a first order tensor $\z$ to a second order tensor $\U$.

In this paper, from this point of view, we study third order tensors and (third order) hypermatrices systematically, and introduce some new concepts for third order tensors, such as transposes, kernel tensors, L-inverses, L-eigenvalues, singular values, and third order orthogonal tensors.

We assume that the dimension is $3$.  In mechanics, tensors are in the three dimensional space.  It is easy to extend most results to higher dimensions.

In the next section, we review knowledge about first and second order tensors and their representative matrices.

We start to study third order tensors and hypermatrices in Section 3.  We study various operations and symmetric and anti-symmetric properties of third order tensors, introduce transposes of third order tensors, and third order orthogonal tensors in that section.   The transpose of a third order tensor is uniquely defined.  By applying the transpose operation three times to a third order tensor, we recover that third order tensor itself.   A third order tensor is called a third order orthogonal tensor if its (second order tensor) product with its transpose is the second order identity tensor.  We also define non-singularity of a third order tensor in that section.

In Section 4 we introduce the kernel tensor and the L-inverse of a third order tensor.
The kernel tensor of a third order tensor is a second order positive semi-definite symmetric tensor, which is the product of that third order tensor and its transpose.     A third order tensor has an L-inverse if and only if it is nonsingular.    The L-inverse of a third order nonsingular tensor is always unique.    If a third order tensor $\A$ maps a first order tensor $\z$ into a second order tensor $\U$, then its L-inverse $\A^{-1}$ can recover $\z$ from $\U$ in the operation $\z = \A^{-1}\U$.   A typical example is for the inverse piezoelectric effect.  Here, the letter ``L'' is named after Levi-Civita.

Then, in Section 5, we define L-eigenvalues for a third order tensor.  The letter ``L'' here is also named after Levi-Civita.   A third order tensor has three L-eigenvalues, with associated L-eigenvectors and L-eigentensors.   L-eigentensors are second order tensors.   Three L-eigenvalues are all nonnegative.    The third order tensors are nonsingular if and only if all of its L-eigenvalues are positive.  L-eigenvectors and L-eigentensors of different L-eigenvalues are orthogonal to each other, respectively.    The three L-eigenvalues of a third order tensor are square roots of three eigenvalues of its kernel tensor.   The L-eigenvectors of third order tensor are corresponding eigenvectors of the kernel tensor of the third order tensor.   From these we may compute the corresponding L-eigenvectors of the third order tensor.    Using these we may compute the L-inverse of the third order tensor if all of its L-eigenvalues are positive.

For a second order symmetric tensor, it is well-known that it has an eigenvector decomposition and three principal invariants \cite{HXL-04}.  In Section 6, for a third order partially symmetric tensor, we also give its eigenvector decomposition.  The piezoelectric tensor, the inverse piezoelectric tensor, and third order symmetric tensors are third order partially symmetric tensors.  We make a conjecture a third order symmetric tensor has seven independent invariants.  We raise the questions on how may independent invariants a third order tensor, a third order partially symmetric tensor and a third order cyclically symmetric tensor may have respectively.

In Section 7, we introduce singular values, C-eigenvalues and Z-eigenvalues for a third order tensor.   As L-eigenvalues, they are all invariants of that third order tensor.

We study the Levi-Civita tensor and related third order tensors in Section 8.  We show that the Levi-Civita tensor is nonsingular, its L-inverse is a half of itself, and its three L-eigenvalues are all the square root of two.

Some final remarks are made in Section 9.

\section{First and Second Order Tensors}

In this section, we review some preliminary knowledge on first and second order tensors.

We use $x_i, y_i, z_i, \cdots$, to denote vectors in $\Re^3$, and $u_{ij}, v_{ij}, \cdots$, to denote $3 \times 3$ matrices in $\Re^3$, where the indices $i$ and $j$ go from $1$ to $3$.   In a product of vectors and matrices, if one index is repeated twice, it means summation on this index from $1$ to $3$.   This is the common usage in the literature of physics.   Actually, if we single out one vector or one matrix, their indices can be substituted with any other ones.  But if we consider a product or a relation, or an equality of vectors and matrices, the indices there are related, and cannot be changed arbitrarily.

We use $\delta_{ij}$ for the Kronecker symbol.   In the above hypermatrix notation, $\delta_{ij}$ denotes the identity matrix.     A matrix $p_{ij}$ is called an orthogonal matrix if it satisfies
$$p_{ik}p_{jk} = \delta_{ij}.$$

We use small bold letters $\x, \y, \z, \cdots$, to denote first order tensors in a three dimensional physical space $\HH$.  First order tensors may also be called vectors.   To distinguish them from vectors $x_i, y_i, z_i, \cdots$ in $\Re^3$, in the most places of this paper, we only call them first order tensors.  First order tensors have additions among themselves and multiplications with scalars.  Therefore, there is a first order zero tensor, which we denote as $\0$.    Assume that for any $\x, \y \in \HH$, there is a scalar, denoted as $\x \bullet \y$, which satisfies the inner product rules, and $\HH$ is complete regarded to this inner product.    Then, mathematically,
$\HH$ is a Hilbert space.   Thus, we call $\x \bullet \y$ the inner product of $\x$ and $\y$.   If $\x \bullet \y = 0$, then we say that $\x$ and $\y$ are orthogonal.  If $\x \bullet \x =1$, then we say that $\x$ is a first order unit tensor.    Suppose that we have an orthonormal basis $\{ \e_1, \e_2, \e_3 \}$.  Under this basis, two first order tensors $\x$ and $\y$ in $\HH$ are represented by two vectors $x_i$ and $y_i$ in $\Re^3$, respectively.   Then we have
$$\x \bullet \y = x_iy_i.$$
The summation value $x_iy_i$ is actually independent from the choice of the basis.
In particular, the first order zero tensor $\0$ is always represented by the zero vector and vice versa.

We use capital bold letters $\U, \V, \cdots$, to denote second order tensors in $\HH$.  Denote the set of all second order tensors in $\HH$ by $B(\HH)$.  Under an orthonormal basis $\{ \e_1, \e_2, \e_3 \}$, a second order tensor $\U$ in $\HH$ is represented by a matrix $u_{ij}$ in $\Re^3$.  Assume that a first order tensor $\x$ is represented by $x_i$ under this basis.    Suppose that we have another orthonormal basis $\{ \g_1, \g_2, \g_3 \}$.  Under that basis, assume that $\x$ is represented by a vector $z_q$ in $\Re^3$, and $\U$ is represented by a matrix $v_{qr}$ in $\Re^3$.  Then there is an orthogonal matrix $p_{iq}$, which is only dependent on these two bases, such that $x_i = p_{iq}z_q$ and $u_{ij} = p_{iq}p_{jr}v_{qr}$.

Second order tensors are physical quantities.   They also have additions among themselves and multiplications with scalars.   Thus, there is a second order zero tensor, which we denote as $\OO$. Furthermore, second order tensors may be applied to first order tensors, two first order tensors may have a second order tensor product, and two second order tensors may have products.   Let $\U, \V, \W \in B(\HH)$, $\x, \y \in \HH$.   Assume that they are represented by $u_{ij}, v_{ij}, w_{ij}, x_i$ and $y_j$ under an orthonormal basis.   These operations can be described as follows.

1. $\x \U = \y$, represented by $x_iu_{ij} = y_j$.

2. $\x = \U \y$, represented by $x_i = u_{ij}y_j$.

3. $\U = \x \otimes \y$, represented by $u_{ij} = x_iy_j$.   Such a second order tensor $\U = \x \otimes \y$ is called a second order rank-one tensor.

4. $\x\U \y = x_iu_{ij}y_j$.  This value is independent from the choice of the basis.

5. $\U \bullet \V  = u_{ij}v_{ij}$.  This value is also independent from the choice of the basis, and called the inner product of $\U$ and $\V$.

6. $\W =\U \V$, represented by $w_{ij} = u_{ik}v_{kj}$.   The second order tensor $\W$ is called the product of $\U$ and $\V$.

A second order tensor $\U \in B(\HH)$ is called singular if there is $\y \in \HH, \y \not = \0$ such that $\U\y = \0$.   Otherwise, $\U$ is called nonsingular.

For any $\U \in B(\HH)$, there is a unique second order tensor $\U^\T \in B(\HH)$ such that for for any $\x, \y \in \HH$, $\x \U \y = \y \U^\T \x$.    The second order tensor $\U^\T$ is called the transpose of $\U$.   Clearly, under an orthonormal basis, if $\U$ is represented by $u_{ij}$, then $\U^\T$ must be represented by $u_{ji}$.   For any $\U \in B(\HH)$, we have
 $(\U^\T)^\T = \U$.    A second order tensor $\U$ is nonsingular if and only if $\U^\T$ is nonsingular.   If $\U = \U^\T$, then $\U$ is called symmetric. If $\U = - \U^\T$, then $\U$ is called anti-symmetric.   They are always represented by symmetric matrices and anti-symmetric matrices respectively, under an orthonormal basis.

 If $\U = \U \V$ for any $\U \in B(\HH)$, then $\V$ is a special tensor, called the second order identity tensor, and denoted by $\I$.   For any $\U \in B(\HH)$, we also have $\U = \I \U$.   The identity tensor $\I$ is always represented by $\delta_{ij}$.

 If $\U\V = \I$, then $\V$ is called the inverse of $\U$, and denoted as $\U^{-1}$.   A second order tensor $\U$ has an inverse if and only if it is nonsingular, the inverse is unique if it exists, and we always have $(\U^{-1})^{-1} = \U$.

 If $\U\U^\T = \I$, then $\U$ is called an orthogonal tensor.   Thus, an orthogonal tensor $\U$ is always nonsingular, and $\U^{-1} = \U^\T$.   A second order tensor is orthogonal if and only if its representative matrix under an orthonormal basis is orthogonal.

Since the value of $\U \bullet \V$ is independent from the choice of the basis, for any $\U \in B(\HH)$, $\U \bullet \I$ is independent from the choice of the basis.   This value is called the trace of $\U$, represented by $u_{ii}$ under an orthonormal basis.   A second order tensor is called a second order traceless tensor if its trace is zero.

If $\U \bullet \V = 0$, then we say that $\U$ and $\V$ are orthogonal.    If $\U \bullet \U = 1$, then we say that $\U$ is a second order unit tensor.

If $\U \y = \lambda \y$ for $\y \not = \0$, then $\lambda$ is called an eigenvalue of $\U$ and $\y$ is called an eigenvector of $\U$, associated with the eigenvalue $\lambda$.  The eigenvalue of $\U$ is independent from the choice of the basis.   A second order tensor is nonsingular if and only if it has no zero eigenvalue.     A second order symmetric tensor $\U$ has three real eigenvalues $\lambda_1 \ge \lambda_2 \ge \lambda_3$.   Furthermore, there are three eigenvectors $\y_1, \y_2$ and $\y_3$ associated with  $\lambda_1, \lambda _2$ and $\lambda_3$, which are orthogonal to each other, i.e., $\U \y_i = \lambda_i \y_i$,  and $\y_i \bullet \y_j = \delta_{ij}$ for $i, j = 1, 2, 3$.  Furthermore, such a second order symmetric tensor $\U$ has an eigenvector decomposition
\begin{equation} \label{decomp}
\U = \sum_{j=1}^3 \lambda_j  \y_j \otimes \y_j.
\end{equation}

Let $\U \in B(\HH)$.    Then $\{ \y \in \HH : \U\y = \0 \}$ is a linear subspace of $\HH$.  We call this subspace the null space of $\U$, and define the rank of $\U$ as $3$ minus the dimension of the null space of $\U$.

The eigenvalues of a second order tensor are invariants of that tensor.   For a second order symmetric tensor $\U$, its trace is equal to the sum of its eigenvalues, the determinant of any of its representative matrices is equal to the product of its eigenvalues.  Hence, the determinant of any of its representative matrices is also an invariant of that tensor, and can be called the determinant of that tensor.    This is also true for second order nonsymmetric tensors.  The discussion of this involves complex eigenvalues.   For a second order tensor $\U$, we denote its determinant by det$(\U)$ and its trace by tr$(\U)$.   Then $\phi(\lambda) \equiv$ det$(\lambda \I - \U)$ is a one dimensional cubic polynomial.   We call $\phi(\lambda)$ the characteristic polynomial of $\U$. The three roots of $\phi(\lambda)$ are exactly three eigenvalues of $\U$.

Let $\U$ be a second order symmetric tensor.  Then it has three independent invariants tr$(\U)$, tr$(\U^2)$ and tr$(\U^3)$ \cite{Zh-94}.

Suppose $\U$ is a second order symmetric tensor.   We say that $\U$ is positive semi-definite if for any $\x \in \HH$, $\x \U \x \ge 0$.   We say that $\U$ is positive definite if for any $\x \in \HH$ and $\x \not = \0$, we have $\x \U \x > 0$.    A second order symmetric tensor is positive semi-definite if and only if all of its eigenvalues are nonnegative.    A second order symmetric tensor is positive definite if and only if all of its eigenvalues are positive.

These are the main content of the theory of first and second order tensors.

\section{Third Order Tensors and Hypermatrices}

We now study third order tensors and (third order) hypermatrices.

We use $a_{ijk}, b_{ijk}, \cdots$, to denote $3 \times 3 \times 3$ hypermatrices in $\Re^3$, where the indices $i, j$ and $k$ go from $1$ to $3$.     Again, in a product of vectors, matrices and hypermatrices, if one index is repeated twice, it means summation on this index from $1$ to $3$.

We call a hypermatrix $a_{ijk}$ an orthogonal hypermatrix if
$$a_{ijk}a_{ljk} = \delta_{il}.$$

We use calligraphic letters $\A, \B, \cdots$, to denote third order tensors in $\HH$, and denote the set of third order tensors in $\HH$ by $T(\HH)$.   When we discuss the products of two third order tensors, we also need to refer to a fourth order tensor, denoted by the calligraphic letter $\Ten$.

 Suppose that we have an orthonormal basis $\{ \e_1, \e_2, \e_3 \}$.  Under this basis, a first order tensor $\x$ in $\HH$ is represented by a vector $x_i$ in $\Re^3$, a second order tensor $\U$ in $\HH$ is represented by a matrix $u_{ij}$ in $\Re^3$, a third order tensor $\A$ is represented by a hypermatrix $a_{ijk}$ in $\Re^3$.   Suppose that we have another orthonormal basis $\{ \g_1, \g_2, \g_3 \}$.  Under that basis, assume that $\x$ is represented by a vector $y_q$ in $\Re^3$, $\U$ is represented by a matrix $v_{qr}$ in $\Re^3$, $\A$ is represented by a hypermatrix $b_{qrs}$ in $\Re^3$.  Then there is an orthogonal matrix $p_{iq}$, which is only dependent on these two bases, such that $x_i = p_{iq}y_q$, $u_{ij} = p_{iq}p_{jr}v_{qr}$, and $a_{ijk} = p_{iq}p_{jr}p_{ks}b_{qrs}$.

 Third order tensors are also physical quantities.   They also have additions among themselves and multiplications with scalars.   Thus, there is a third order zero tensor, which we denote as $\OOO$. Furthermore, third order tensors may be applied to first order and second order tensors, one first order tensor and one second order tensor, or three first order tensors may have a third order tensor product, and two third order tensors may also have products.   Let $\A, \B \in T(\HH), \U, \V \in B(\HH)$, $\x, \y, \z \in \HH$, and a fourth order tensor $\Ten$ be represented by $a_{ijk}, b_{ijk}, u_{ij}, v_{jk}, x_i, y_j, z_k$ and $t_{ijkl}$ under an orthonormal basis.   Some of these operations can be described as follows.

1. $\x\A = \V$, represented by $x_ia_{ijk}= v_{jk}$.

2. $\U = \A\z$, represented by $u_{ij}=a_{ijk}z_k$.

3. $\U\A = \z$, represented by $u_{ij}a_{ijk} = z_k$.

4. $\x = \A\V$, represented by $x_i = a_{ijk}v_{jk}$.

5. $\x\y\A = \z$, represented by $x_iy_ja_{ijk} = z_k$.

6. $\x\A\z = \y$, represented by $x_ia_{ijk}z_k = y_j$.

7. $\x = \A\y\z$, represented by $x_i = a_{ijk}y_jz_k$.

8. $\A = \U \otimes \z$, represented by $a_{ijk} = u_{ij}z_k$.

9. $\A = \x \otimes \V$, represented by $a_{ijk} = x_iv_{jk}$.

10. $\A = \x \otimes \y \otimes \z$, represented by $a_{ijk} = x_iy_jz_k$.  Such a third order tensor $\A = \x \otimes \y \otimes \z$ is called a third order rank-one tensor.

11. $\x\A\y\z = x_ia_{ijk}y_jz_k$.   This value is independent from the choice of the basis.

12. $\A \bullet \B = a_{ijk}b_{ijk}$.   This value is also independent from the choice of the basis, and called the inner product of $\A$ and $\B$.

13. $\U = \A \B$, represented by $u_{il} = a_{ijk}b_{jkl}$.   The tensor $\U$ is called the second order tensor product of $\A$ and $\B$.

14. $\Ten = \A \oplus \B$, represented by $t_{ijkl} = a_{ijm}b_{mkl}$.    The tensor $\Ten$ is called the fourth order tensor product of $\A$ and $\B$.

Suppose that $\A \in T(\HH)$ is represented by hypermatrix $a_{ijk}$ under an orthonormal basis $\{ \e_1, \e_2, \e_3 \}$ of $\HH$.   Then
$$a_{ijk} = \e_i\A\e_j\e_k.$$
Note that $\{-\e_1, -\e_2, -\e_3 \}$ is also an orthonormal basis of $\HH$.   Under that orthonormal basis, $\A$ is represented by hypermatrix
$$-\e_i\A(-\e_j)(-\e_k) = - a_{ijk}.$$
This property of third order tensors is significant.   In a crystal with a reflection symmetry, a third order tensor must be zero.    Thus, there is no piezoelectric effect in such a crystal
\cite{Ha-07, KPG-08, Lo-89}.

A third order tensor $\A \in T(\HH)$ is called singular if there is $\x \in \HH$, $\x \not = \0$ such that $\x\A = \OO$.  Otherwise, $\A$ is called nonsingular.

A third order tensor $\A \in T(\HH)$ is called right-side symmetric if for any $\y, \z \in \HH$, $\A\y\z = \A\z\y$.   Under an orthonormal basis, if $\A$ is represented by $a_{ijk}$, then we have $a_{ijk} = a_{ikj}$ for all $i, j$ and $k$.    A typical example of a third order right-side symmetric tensor is the piezoelectric tensor in solid crystals \cite{CJQ-17, CC-80, Ha-07, KPG-08, Lo-89, ZTP-13}.   On the other hand, in liquid crystals, the piezoelectric tensor is not right-side symmetric \cite{Ja-10, JPSRd-09, JTSSS-02}.

A third order tensor $\A \in T(\HH)$ is called left-side symmetric if for any $\x, \y \in \HH$, $\x\y\A = \y\x\A$.   Under an orthonormal basis, if $\A$ is represented by $a_{ijk}$, then we have $a_{ijk} = a_{jik}$ for all $i, j$ and $k$.    A typical example of a third order left-side symmetric tensor is the inverse piezoelectric tensor \cite{Ha-07}.

A third order tensor $\A \in T(\HH)$ is called centrally symmetric if for any $\x, \z \in \HH$, $\x\A\z = \z\A\x$.   A third order tensor is called partially symmetric if it is either right-side symmetric, or left-side symmetric, or centrally symmetric.   Clearly, if a third order tensor is both right-side and left-side symmetric, then it is also centrally symmetric.  We call such a third order tensor a third order symmetric tensor.    Obviously, a third order tensor is symmetric if it is both right-side and centrally symmetric, or if it is both left-side and centrally symmetric.    The piezoelectric tensor, the inverse piezoelectric tensor, and third order symmetric tensors are third order partially symmetric tensors.

A third order tensor $\A \in T(\HH)$ is called left-side anti-symmetric if for any $\x, \y \in \HH$, $\x\y\A = -\y\x\A$.   Under an orthonormal basis, if $\A$ is represented by $a_{ijk}$, then we have $a_{ijk} = -a_{jik}$ for all $i, j$ and $k$.    A typical example of a third order left-side anti-symmetric tensor is the Hall tensor \cite{Ha-07}.

A third order tensor $\A \in T(\HH)$ is called right-side anti-symmetric if for any $\y, \z \in \HH$, $\A\y\z = -\A\z\y$.  A third order tensor $\A \in T(\HH)$ is called centrally anti-symmetric if for any $\x, \z \in \HH$, $\x\A\z = -\z\A\x$.   A third order tensor $\A \in T(\HH)$ is called totally anti-symmetric if it is right-side, left-side and centrally anti-symmetric.

A famous example of third order totally anti-symmetric tensors is the Levi-Civita tensor, or called the permutation tensor, which we denote as $\E$.  Under a certain orthonormal basis, $\E$ is represented by the Levi-Civita hypermatrix $\epsilon_{ijk}$ with $\epsilon_{123}= \epsilon_{312} = \epsilon_{231} = 1$, $\epsilon_{213} = \epsilon_{321} = \epsilon_{132} = -1$ and $\epsilon_{ijk} = 0$ otherwise.    Other third order totally anti-symmetric tensors are multiples of the Levi-Civita tensor.   Then, we see that the representative hypermatrix of the the Levi-Civita tensor $\E$ is either $\epsilon_{ijk}$ or $-\epsilon_{ijk}$.   A question is: is there any other third order tensor, except multiples of the Levi-Civita tensor, which is represented by either a fixed hypermatrix $a_{ijk}$, or $-a_{ijk}$?

Suppose that $\A, \B \in T(\HH)$ satisfy $\x\A\y\z = \y\B\z\x$ for any $\x, \y, \z \in \HH$.   Then we call $\B$ the transpose of $\A$ and denote that $\B = \A^\T$.    In particular, thermodynamical relations demand that the components of the piezoelectric and electrostriction tensors take on the same numerical value.  Hence, the transpose of the piezoelectric tensor is the inverse piezoelectric tensor (the electrostriction tensor) \cite{Ha-07}.

\begin{Proposition} \label{p1}
Suppose $\A \in T(\HH)$.  Then its transpose always exists and is unique, and we
always have $[(\A^\T)^\T]^\T = \A$.
\end{Proposition}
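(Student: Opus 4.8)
The plan is to translate the coordinate-free defining relation into a statement about hypermatrix entries in a fixed orthonormal basis, where it becomes a matching of coefficients of a trilinear form. Fix such a basis and let $a_{ijk}$ represent $\A$. For a candidate transpose $\B$ with representative hypermatrix $b_{ijk}$, the two sides of the defining identity read $\x\A\y\z = x_ia_{ijk}y_jz_k$ and $\y\B\z\x = y_ib_{ijk}z_jx_k$. Since $\x,\y,\z$ range over all of $\HH$, their coordinate vectors $x_i,y_j,z_k$ range over all of $\Re^3$, so the required relation $\x\A\y\z = \y\B\z\x$ is equivalent to the polynomial identity $x_ia_{ijk}y_jz_k = y_ib_{ijk}z_jx_k$ for all $x,y,z\in\Re^3$. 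Both sides are trilinear in $(x,y,z)$, so I would equate the coefficient of $x_py_qz_r$ on each side: the left gives $a_{pqr}$ and the right gives $b_{qrp}$. Hence the identity holds if and only if $b_{ijk}=a_{kij}$, that is, the transpose is realized by a single cyclic shift of the three index slots.

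This immediately yields uniqueness, since the coefficient matching forces every entry $b_{ijk}$ to equal $a_{kij}$, so at most one hypermatrix, hence at most one tensor, can satisfy the relation. For existence I would define $\B$ to be the tensor whose representative hypermatrix in the chosen basis is $b_{ijk}:=a_{kij}$; the computation above then shows it satisfies the defining relation for all $\x,\y,\z$. The one point requiring care is that this componentwise prescription genuinely defines a tensor, i.e.\ that it is compatible with the change-of-basis law $a_{ijk}=p_{iq}p_{jr}p_{ks}b_{qrs}$. I would verify that the cyclic index shift commutes with the orthogonal transformation: substituting the transformation rule into $a_{kij}$ and relabelling the three summation indices reproduces exactly $p_{iq}p_{jr}p_{ks}b'_{qrs}$ with $b'_{qrs}:=a'_{sqr}$, confirming that $\B$ transforms correctly. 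Alternatively, one may simply note that $\x\A\y\z$ and $\y\B\z\x$ are basis-independent scalars, so checking the relation componentwise in one basis already establishes it for all $\x,\y,\z$.

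Finally, for the triple-transpose identity I would iterate the index rule. Writing the transpose as the operation $(i,j,k)\mapsto(k,i,j)$ on the three slots, applying it once sends $a_{ijk}$ to $a_{kij}$, a second time to $a_{jki}$, and a third time back to $a_{ijk}$; equivalently, the transpose is a generator of the cyclic group of order three acting on the slots, so its cube is the identity. Thus $[(\A^\T)^\T]^\T$ is represented by $a_{ijk}$ in the chosen basis and therefore equals $\A$. I do not expect a genuine obstacle: the whole argument is bookkeeping, and the only step demanding attention is tracking the correct index permutation and confirming that it respects the orthogonal change of basis.
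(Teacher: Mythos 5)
Your proof is correct and takes essentially the same route as the paper's: pass to a fixed orthonormal basis, realize the transpose as a cyclic shift of the index slots (your coefficient matching yields $b_{ijk}=a_{kij}$, which is the right shift, as a rank-one check $\A=\x\otimes\y\otimes\z \Rightarrow \A^\T=\y\otimes\z\otimes\x$ confirms), get uniqueness from nondegeneracy of the trilinear pairing, and observe that the shift has order three so the triple transpose returns $\A$. Your extra care about basis-independence is precisely the fact the paper invokes (tensor contraction relations of representative hypermatrices are independent of the choice of basis), so no gap remains.
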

\begin{proof}   Let $\A$ be represented by $a_{ijk}$ under an orthonormal basis.    Let $\B$ be represented by $a_{jki}$ under this basis.  Then we may easily see that $\B = \A^\T$.   This proves the existence.

Suppose that $\C$ also satisfies the definition of the transpose of $\A$.   Then for any $\x, \y, \z \in \HH$, we have $\y(\B-C)\z\x = 0$.   This implies that $\B - \C = \OOO$, i.e., $\B = \C$.  This proves the uniqueness.

The third conclusion is from the definition directly.
\end{proof}

If $\A^\T = \A$, then we say that $\A$ is cyclically symmetric.  The Levi-Civita tensor is a cyclically symmetric tensor.     It is also easy to see that a third order tensor is symmetric if and only if it is both partially and cyclically symmetric.

If $\A\A^\T = \I$, then we call $\A$ a third order orthogonal tensor.   Since the representative matrix of $\I$ is always $\delta_{ij}$, we have the following proposition.

\begin{Proposition} \label{p1-1}
Suppose $\A \in T(\HH)$.  Then $\A$ is orthogonal if and only its representative hypermatrix under an orthonormal basis is orthogonal.
\end{Proposition}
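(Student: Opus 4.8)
The plan is to translate the defining tensor identity $\A\A^\T = \I$ into an identity between hypermatrices in a single fixed orthonormal basis, and then to observe that it coincides verbatim with the defining relation of an orthogonal hypermatrix. Since $\A\A^\T = \I$ is a genuine relation between tensors and $\I$ is represented by $\delta_{il}$ in every orthonormal basis, nothing is lost by working in one basis; I would add a short remark at the end explaining why the conclusion is then independent of the basis chosen.

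First I would fix an orthonormal basis $\{\e_1,\e_2,\e_3\}$ and let $\A$ be represented by $a_{ijk}$. By (the proof of) Proposition \ref{p1}, the transpose $\A^\T$ is represented by the corresponding cyclic permutation of the entries of $a_{ijk}$; writing $b_{ijk}$ for this hypermatrix, the entry that enters the product is $b_{jkl} = a_{ljk}$. Next I would invoke the second order tensor product (operation 13 of Section 3): the tensor $\A\A^\T$ is represented by $u_{il} = a_{ijk}b_{jkl}$. Substituting the transpose entries, this contraction collapses to $u_{il} = a_{ijk}a_{ljk}$, i.e.\ the contraction of $a_{ijk}$ with itself over its last two indices. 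Finally, since $\I$ is represented by $\delta_{il}$, the identity $\A\A^\T = \I$ holds if and only if $a_{ijk}a_{ljk} = \delta_{il}$, which is exactly the condition that $a_{ijk}$ be an orthogonal hypermatrix. This chain of equivalences yields both directions of the Proposition at once.

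The only real obstacle is the index bookkeeping: one must check that the cyclic permutation defining $\A^\T$, fed into the contraction pattern $u_{il} = a_{ijk}b_{jkl}$ of the second order product, produces precisely the pairing $a_{ijk}a_{ljk}$ appearing in the definition of an orthogonal hypermatrix, and not some other contraction over the repeated indices. Once the indices are matched the argument is immediate. To finish, I would remark that, because $\A\A^\T = \I$ is a basis free statement, orthogonality of the representing hypermatrix in one orthonormal basis forces it in every orthonormal basis; alternatively this can be verified directly from the transformation rule $a_{ijk} = p_{iq}p_{jr}p_{ks}b_{qrs}$ together with the orthogonality $p_{ik}p_{jk} = \delta_{ij}$ of the change of basis matrix.
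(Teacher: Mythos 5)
Your proof is correct and is essentially the paper's own argument: the paper states this proposition with only the one-line remark that $\I$ is always represented by $\delta_{ij}$, leaving implicit exactly your coordinate translation $\A\A^\T = \I \Leftrightarrow a_{ijk}a_{ljk} = \delta_{il}$. Your index bookkeeping $b_{jkl} = a_{ljk}$ for the transpose is the right one (it is the contraction making the kernel tensor the Gram matrix $u_{il} = a_{ijk}a_{ljk}$ that underlies the unfolding $U = AA^\T$ in Theorem \ref{t1}), and your closing remark on basis independence via $a_{ijk} = p_{iq}p_{jr}p_{ks}b_{qrs}$ is a sound, if optional, addition.
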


Let $\A \in T(\HH)$ be symmetric.  We say that $\A$ is traceless if $\A\I = \0$.   Third order symmetric traceless tensors play an important role in the study of liquid crystals.   It was theoretically proposed \cite{Fe-95} and experimentally confirmed \cite{BPC-02,LR-02} that the phases in liquid crystals
composed of bent-core molecules could be described by means of a third order symmetric traceless tensor.

In the next three sections, we study more properties of third order tensors.    An important fact we will use in our discussion is that the tensor contraction relations of representative hypermatrices, matrices and vectors of tensors are independent from the choice of basis.

\section{The Kernel Tensor and The L-Inverse of A Third Order Tensor}

Let $\A \in T(\HH)$.  Let $\U = \A \A^\T$.  Then we call $\U$ the kernel tensor of $\A$.

\begin{Theorem} \label{t1}
Let $\A \in T(\HH)$ and $\U \in B(\HH)$ be its kernel tensor.   Then $\U$ is symmetric and positive semi-definite.    Furthermore, $\U$ is positive definite if and only if $\A$ is nonsingular.  In particular, a third order orthogonal tensor is nonsingular.
\end{Theorem}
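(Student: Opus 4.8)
The plan is to reduce all four assertions to a single Gram-type identity for the kernel tensor, after which everything is read off from elementary properties of the inner product on $B(\HH)$. Working in a fixed orthonormal basis is legitimate here, since the tensor contraction relations are independent of the basis. First I would compute the representative matrix of $\U = \A\A^\T$ by composing the transpose rule with the product rule $u_{il} = a_{ijk}b_{jkl}$; this gives
$$u_{il} = a_{ijk}a_{ljk},$$
where the summation runs over $j$ and $k$ and exhibits $\U$ as a Gram matrix. This is also the expression that makes the orthogonality condition $a_{ijk}a_{ljk} = \delta_{il}$ for hypermatrices coincide with $\A\A^\T = \I$, which is a useful consistency check.

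The central identity I would then record is
$$\x\U\y = (\x\A)\bullet(\y\A) \qquad \text{for all } \x,\y\in\HH.$$
It is immediate from $u_{il} = a_{ijk}a_{ljk}$ once one recognizes $(\x\A)_{jk} = x_i a_{ijk}$ as the representative matrix of the second order tensor $\x\A$. Equivalently and more invariantly, one first proves $\x(\A\B)\y = (\x\A)\bullet(\B\y)$ from the product rule and $\A^\T\y = \y\A$ from the defining relation $\x\A\y\z = \y\A^\T\z\x$ for the transpose, then composes the two.

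With this identity secured the four claims are short. Symmetry of $\U$ holds because the inner product is symmetric: $\x\U\y = (\x\A)\bullet(\y\A) = (\y\A)\bullet(\x\A) = \y\U\x$. Positive semidefiniteness follows by setting $\y = \x$, giving $\x\U\x = (\x\A)\bullet(\x\A) \ge 0$. For the equivalence I would note that $\x\U\x = 0$ forces $(\x\A)\bullet(\x\A) = 0$, hence $\x\A = \OO$; thus $\U$ fails to be positive definite exactly when some nonzero $\x$ satisfies $\x\A = \OO$, which is precisely the definition of $\A$ being singular. Finally, if $\A$ is orthogonal then $\U = \A\A^\T = \I$ is positive definite, so the equivalence just proved yields that $\A$ is nonsingular.

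I do not expect a genuine obstacle; the only delicate point is the index bookkeeping, namely composing the transpose and the product correctly so that the kernel tensor emerges as the Gram form $a_{ijk}a_{ljk}$ rather than some other contraction. Once the identity $\x\U\y = (\x\A)\bullet(\y\A)$ is in place, the rest is a routine translation of ``positive definite'' and ``singular'' into the vanishing of the second order tensor $\x\A$.
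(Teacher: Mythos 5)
Your proof is correct, and it takes a genuinely different route from the paper on the substantive part of the theorem. The paper also begins with the Gram contraction (its displayed formula $u_{il}=a_{ijk}a_{jkl}$ is an index slip for your $u_{il}=a_{ijk}a_{ljk}$, as its own subsequent identification $U=AA^\T$ and the orthogonal-hypermatrix condition $a_{ijk}a_{ljk}=\delta_{il}$ confirm), but it then splits the ``if and only if'': the direction ``$\A$ singular $\Rightarrow$ $\U$ not positive definite'' is done directly, while the converse is proved by unfolding $a_{ijk}$ into a $3\times 9$ matrix $A$ and invoking the singular value decomposition $A=S\Sigma T$ to conclude $\sigma_3>0$ and hence positive definiteness. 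Your identity $\x\U\y=(\x\A)\bullet(\y\A)$ replaces all of this: symmetry, positive semi-definiteness, and \emph{both} directions of the equivalence follow uniformly, since $\x\U\x=(\x\A)\bullet(\x\A)$ vanishes exactly when $\x\A=\OO$, which is verbatim the paper's definition of singularity; no SVD or unfolding is needed, and your derivation of the transpose representation $b_{pqr}=a_{rpq}$ (equivalently $\A^\T\y=\y\A$) from $\x\A\y\z=\y\A^\T\z\x$ is the correct bookkeeping. What your argument buys is economy and self-containedness at the level of this theorem; what the paper's SVD route buys is infrastructure, since the same $3\times 9$ unfolding and the singular values $\sigma_1\ge\sigma_2\ge\sigma_3$ are reused immediately afterwards for the L-inverse (Theorem 4.2), the L-eigenvalue theory (Theorem 5.1), and the rank--nullity statement (Proposition 5.2), so the machinery is amortized over the rest of the paper rather than being overhead for this one result.
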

\begin{proof}  Suppose that under an orthonormal basis, $\A$ is represented by $a_{ijk}$.   Then its kernel tensor $\U$ is represented by $u_{il} = a_{ijk}a_{jkl}$, a symmetric and positive semi-definite matrix.   Thus, $\U$ is symmetric and positive semi-definite.

If $\A$ is singular, then there is $\x \in \HH$, $\x \not = \0$ such that $\x \A = \0$.  Then $\x \U \x = (\x \A)(\A^\T \x) = 0$, i.e., $\U$ cannot be positive definite.

For indices $(j, k) = (1, 1), (1, 2), (1, 3), (2, 1), (2, 2), (2, 3), (3, 1), (3, 2), (3, 3)$, we use a single index $r = 1, \cdots, 9$ to represent them.   Then we may regard hypermatrix $a_{ijk}$ as a $3 \times 9$ matrix $A = (a_{ir})$.  Then the matrix $U = (u_{il}) = AA^\T$.  By the singular value decomposition theory of rectangular matrices \cite{WWQ-04}, we have $A = S\Sigma T$, where $S$ is a $3 \times 3$ orthogonal matrix, $T$ is a $9 \times 9$ orthogonal matrix, $\Sigma$ is a $3 \times 9$ diagonal matrix with diagonal entries $\sigma_1 \ge \sigma_2 \ge \sigma_3 \ge 0$, which are singular values of $\A$, and $\sigma_1^2, \sigma_2^2, \sigma_3^2$ are eigenvalues of $U$.   Suppose that $\A$ is nonsingular.  Then $\Sigma$ must have full row rank $3$.  This implies that $\sigma_3 >0$ and thus, $\sigma_3^2 > 0$.   Then, $U$ must be positive definite.  This implies that $\U$ is positive definite.

For a third order orthogonal tensor $\A$, we have $\A\A^\T = \I$ by definition.   Since $\I$ is nonsingular, $\A$ is also nonsingular.
\end{proof}

The kernel tensor $\U = \A\A^\T$ is uniquely defined by $\A$.    Thus, the invariants of $\U$, such as the trace and the determinant of $\U$ are also invariants of $\A$.   We may call the rank of $\U$ as the rank of $\A$.

Let $\A$ and $\B$ be two third order tensors in $\HH$. We say that $\B$ is the L-inverse of $\A$ and denote that $\B = \A^{-1}$, if $\A$ and $\B$ satisfy
\begin{equation} \label{e1}
\A\B = \I
\end{equation}
and
\begin{equation} \label{e2}
\B \oplus \A = \A^\T \oplus (\B^\T)^\T.
\end{equation}

\begin{Theorem} \label{t2}
Suppose that a third order tensor $\A$ in $\HH$ has an L-inverse $\A^{-1}$.   Then such an L-inverse is unique, and $(\A^{-1})^{-1} = \A$.

Furthermore, $\A$ has an L-inverse if and only if $\A$ is nonsingular.

If $\A\A^\T = \alpha\I$ for some $\alpha > 0$, then $\A$ is nonsingular, and $\A^{-1} = {1\over \alpha}\A^\T$.   In particular, if $\A$ is a third order orthogonal tensor, then $\A^{-1} = \A^\T$.
\end{Theorem}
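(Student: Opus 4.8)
The plan is to push everything through the mode-$1$ unfolding already used in the proof of Theorem~\ref{t1}. Write $A = (a_{ir})$ for the $3\times 9$ matrix obtained from $\A$ by merging the last two indices $(j,k)$ into a single index $r$, so that the kernel tensor is $\U = AA^\T$, and recall that $\A$ is nonsingular exactly when $A$ has full row rank $3$, equivalently when $\U$ is positive definite. If $\B$ is unfolded into the $9\times 3$ matrix $B = (b_{rl})$ that merges its \emph{first} two indices, then the first defining relation~\eqref{e1}, $\A\B = \I$, becomes $AB = I$; that is, $B$ must be a right inverse of $A$. The second relation~\eqref{e2} is a fourth-order tensor identity; unpacking the two transposes, it becomes a symmetry constraint whose effect, together with~\eqref{e1}, is to force $B$ to be the Moore--Penrose pseudo-inverse $A^+ = A^\T(AA^\T)^{-1} = A^\T\U^{-1}$, i.e. the unique right inverse of $A$ whose columns lie in the row space of $A$.

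Granting this dictionary, all assertions reduce to standard facts about $A^+$. For ``$\A$ has an L-inverse iff $\A$ is nonsingular'': if $\A$ is singular then $A$ is row-rank deficient and $AB = I$ has no solution, so already~\eqref{e1} fails; if $\A$ is nonsingular I take $\B$ to be the tensor whose first-two-index unfolding is $A^+$, note that $AA^+ = I$ gives~\eqref{e1}, and verify that this choice also satisfies~\eqref{e2}. This settles existence, and uniqueness follows because~\eqref{e1} and~\eqref{e2} together determine $B = A^+$ uniquely. The involution $(\A^{-1})^{-1} = \A$ I would then obtain by checking that $\A$ fulfils the pair of relations~\eqref{e1}--\eqref{e2} defining the L-inverse of $\B = \A^{-1}$; at the level of unfoldings this rests on the pseudo-inverse involution $(A^+)^+ = A$.

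For the final statement, suppose $\A\A^\T = \alpha\I$ with $\alpha > 0$, i.e. $\U = AA^\T = \alpha I$. Then $A$ has full row rank, so $\A$ is nonsingular and $A^+ = A^\T(\alpha I)^{-1} = \tfrac{1}{\alpha}A^\T$. Moreover, by the definition of the transpose the $9\times 3$ matrix merging the first two indices of $\A^\T$ is exactly $A^\T$ (this is the same index identity that yields $\U = AA^\T$ in Theorem~\ref{t1}); hence the first-two-index unfolding of $\tfrac1\alpha\A^\T$ is $\tfrac1\alpha A^\T = A^+ = B$, and therefore $\A^{-1} = \tfrac1\alpha\A^\T$. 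The orthogonal case is $\alpha = 1$, giving $\A^{-1} = \A^\T$.

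The main obstacle is the middle claim of the first paragraph: showing that the fourth-order identity~\eqref{e2}, after the two transposes $\A^\T$ and $(\B^\T)^\T$ are written out in indices, is genuinely equivalent---given~\eqref{e1}---to the pseudo-inverse condition on $B$ (equivalently, that $BA$ is the symmetric orthogonal projector onto the row space of $A$). This is pure index bookkeeping, but because the two transposes induce different cyclic shifts of the indices and are contracted in different modes, it is the one step demanding genuine care; everything else is immediate from the singular value decomposition and pseudo-inverse theory already invoked in the proof of Theorem~\ref{t1}.
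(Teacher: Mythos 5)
Your proposal is essentially the paper's own argument: the paper performs the same unfoldings ($A$ a $3\times 9$ matrix merging the last two indices of $\A$, $B$ a $9\times 3$ matrix merging the first two indices of $\B$), identifies (\ref{e1}) with $AB=I$ and (\ref{e2}) with the remaining Penrose condition $(BA)^\T=BA$ --- the paper likewise asserts, without writing out the indices, that (\ref{e1})--(\ref{e2}) yield all four Penrose conditions, so the bookkeeping you defer (namely that $\B\oplus\A$ unfolds to $b_{ijm}a_{mkl}=(BA)_{(ij),(kl)}$ while $\A^\T\oplus(\B^\T)^\T$ unfolds to $a_{mij}b_{klm}=\bigl((BA)^\T\bigr)_{(ij),(kl)}$) is exactly the step it also leaves implicit --- and it then draws uniqueness, $(\A^{-1})^{-1}=\A$ (via the ``vice versa'' of Moore--Penrose theory, just as you invoke $(A^+)^+=A$), the full-row-rank/nonsingularity criterion, and the $\alpha\I$ formula from the same generalized-inverse theory. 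The only cosmetic differences are that the paper constructs the inverse in the nonsingular case through the SVD $A=S\Sigma T$ rather than your closed form $A^+=A^\T(AA^\T)^{-1}$, and it checks the $\alpha\I$ case directly at the tensor level, using $[(\A^\T)^\T]^\T=\A$ to get $\A^\T\oplus(\B^\T)^\T=\A^\T\oplus\tfrac{1}{\alpha}\A=\B\oplus\A$, instead of passing through the unfolded pseudo-inverse formula.
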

\begin{proof}  Suppose that $\A \in T(\HH)$ has an inverse $\B = \A^\T$.   Assume that under an orthonormal basis, $\A$ and $\B$ are represented by $a_{ijk}$ and $b_{jkl}$ respectively.  For indices $(j, k) = (1, 1), (1, 2), (1, 3), (2, 1)$, $(2, 2), (2, 3), (3, 1), (3, 2), (3, 3)$, we use a single index $r = 1, \cdots, 9$ to represent them.   Then we may regard hypermatrix $a_{ijk}$ as a $3 \times 9$ matrix $A = (a_{ir})$, and  hypermatrix $b_{jkl}$ as a $9 \times 3$ matrix $B = (b_{rl})$, respectively.  By the generalized inverse theory of matrices \cite{WWQ-04}, matrix $A$ has a unique Moore-Penrose inverse, and matrix $B$ is such a unique Moore-Penrose inverse of $A$ and vice versa, if and only if

1. $BAB = B$;

2. $ABA = A$;

3. $AB = (AB)^\T$;

4. $BA = (BA)^\T$.

When (\ref{e1}) and (\ref{e2}) holds, the above four requirements are satisfied.  Thus, $B$ is the unique Moore-Penrose inverse of $A$ and vice versa.    This shows that such an L-inverse is unique if it exists, and we have $(\A^{-1})^{-1} = \A$.  Furthermore, in this case, the product $AB$ is the $3 \times 3$ identity matrix.   Then $A$ must has full row rank.  This implies that $\A$ is nonsingular.

Suppose that $\A$ is nonsingular and is represented by $a_{ijk}$ under an orthonormal basis.   Let $A$ be the $3 \times 9$ matrix constructed as above.    Then $A$ has a singular value decomposition $A=S\Sigma T$ as in the proof of Theorem \ref{t1}.  Then $\Sigma$ is a $3 \times 9$ diagonal matrix, with diagonal entries $\sigma_1 \ge \sigma_2 \ge \sigma_3 > 0$, as $\A$ is nonsingular.   Let $B = (b_{rl})$ be the Moore-Penrose inverse of $A$.  Let $b_{jkl}$ be the hypermatrix recovered from $B$.   Let $\B$ be the third order tensor represented by $b_{ijk}$.   Then we may verify that $\B = \A^{-1}$.

Suppose that $\A\A^\T = \alpha\I$ for some $\alpha > 0$.   Let $\B = {1 \over \alpha}\A^\T$.    Then $\A\B = \I$, and
$$\A^\T \oplus (\B^\T)^\T = \A^\T \oplus {1 \over \alpha}\A = \B \oplus \A,$$
i.e, (\ref{e1}) and (\ref{e2}) are satisfied.  Thus, $\A^{-1} =\B = {1 \over \alpha}\A^\T$.
\end{proof}

The concept of L-inverse is somehow connected with the Moore-Penrose inverse of a rectangular matrix.  But they are different.   The Moore-Penrose inverse of a rectangular always exists.   The L-inverse of a third order tensor exists if and only if the third order tensor is nonsingular.  Second, the L-inverse is a tensor concept.  It is associated with the transformation from a first order tensor to a second order tensor.   The Moore-Penrose inverse of a rectangular matrix has no such a property.   The following application of L-inverses further confirms this.

\begin{Proposition} \label{p1}
Suppose that we have tensor relation
$$\x\A = \V$$
and $\A^{-1}$ exists.  Then we have
$$\x = \V\A^{-1}.$$
\end{Proposition}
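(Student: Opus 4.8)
The plan is to reduce everything to component form under a fixed orthonormal basis and then collapse the resulting contraction using the defining identity of the L-inverse. First I would choose an orthonormal basis $\{\e_1, \e_2, \e_3\}$ and represent $\x$, $\A$, $\V$ and $\A^{-1}$ by the arrays $x_i$, $a_{ijk}$, $v_{jk}$ and $b_{jkl}$ respectively. Under this basis the hypothesis $\x\A = \V$ is exactly the component relation $x_i a_{ijk} = v_{jk}$.

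Next I would record what the existence of $\A^{-1}$ supplies. Since $\A^{-1} = \B$ exists, the first defining relation (\ref{e1}) holds, namely $\A\A^{-1} = \I$. In components this reads $a_{ijk}b_{jkl} = \delta_{il}$, because the identity tensor $\I$ is always represented by $\delta_{il}$ and the product $\A\A^{-1}$ contracts the last two indices of $a_{ijk}$ against the first two indices of $b_{jkl}$.

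The core of the argument is then a single substitution. The product $\V\A^{-1}$ is of the type in which a second order tensor is contracted into the first two slots of a third order tensor, so its $l$-th component equals $v_{jk}b_{jkl}$. Substituting $v_{jk} = x_i a_{ijk}$ and regrouping the contraction gives
$$v_{jk}b_{jkl} = x_i\, a_{ijk}b_{jkl} = x_i \delta_{il} = x_l,$$
which is precisely the representative vector of $\x$. Since such component identities among representative arrays are independent of the choice of basis, this establishes $\V\A^{-1} = \x$.

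I do not expect a genuine obstacle here; the only point requiring care is the index bookkeeping, specifically confirming that the contraction defining $\V\A^{-1}$ pairs the two indices of $\V$ with exactly the first two indices of $\A^{-1}$, so that the relation $a_{ijk}b_{jkl} = \delta_{il}$ applies directly and collapses the sum to $x_l$. It is also worth noting that only the first half of the L-inverse definition, equation (\ref{e1}), is used in this recovery; the second relation (\ref{e2}) plays no role in the computation.
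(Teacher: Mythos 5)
Your proposal is correct and is essentially the paper's own proof: the paper writes $\V\A^{-1} = (\x\A)\A^{-1} = \x(\A\A^{-1}) = \x\I = \x$ in tensor notation, and your component computation $v_{jk}b_{jkl} = x_i a_{ijk} b_{jkl} = x_i\delta_{il} = x_l$ is exactly this regrouping unpacked under an orthonormal basis. Your observation that only the first defining relation $\A\A^{-1} = \I$ of the L-inverse is needed, and not (\ref{e2}), likewise matches the paper's argument.
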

\begin{proof} We have
$$\V\A^{-1} = (\x\A)\A^{-1} = \x(\A\A^{-1}) = \x\I = \x.$$
\end{proof}

A typical example of $\x\A = \V$ is the inverse piezoelectric effect, where $\A$ is the piezoelectric tensor, $\x$ is the electric field strength and $\V$ is the deformation \cite{Ha-07, KPG-08}.  Thus, if we know $\A^{-1}$, we may calculate $\x$ from $\V$.   Note that $\x\A= \V$ is equivalent to $\V = \B\x$, where $\B$ is the inverse piezoelectric tensor, as $\B = \A^\T$.

A question is: what is the relation between the inverse piezoelectric tensor $\B$ and the L-inverse of the piezoelectric tensor $\A$?   If $\B \equiv \A^\T = \A^{-1}$, then by Theorem \ref{t2}, the piezoelectric tensor should be an orthogonal tensor.

\section{L-Eigenvalues of a Third Order Tensor}

We now define L-eigenvalues for a third order tensor.

Let $\A \in T(\HH), \V \in B(\HH), \x \in \HH$ and $\sigma \in \Re$.    We say that $\sigma$ is an L-eigenvalue of $\A$, $\V$ and $\x$ are associated L-eigentensor and L-eigenvector respectively if $\sigma \ge 0$ and the following equations hold.
\begin{equation} \label{e3}
\A\V = \sigma \x, \ \ \ \A^\T\x = \sigma \V, \ \ \ \V \bullet \V = 1, \ \ \  \x \bullet \x = 1.
\end{equation}
We count the multiplicity of an L-eigenvalue of $\A$ by the number of linearly independent L-eigenvectors.

\begin{Theorem} \label{t3}
Suppose that $\A \in T(\HH)$.   Then $\A$ has three L-eigenvalues $\sigma_1 \ge \sigma_2 \ge \sigma_3 \ge 0$, with associated L-eigentensors $\V_1, \V_2, \V_3$, and L-eigenvectors $\x_1, \x_2, \x_3$.   They have the following properties:

1. $\V_i \bullet \V_j = \delta_{ij}$ and $\x_i \bullet \x_j = \delta_{ij}$.

2. If $\U = \A\A^\T$ is the kernel tensor of $\A$, with eigenvalues $\lambda_1 \ge \lambda_2 \ge \lambda_3$, then $\sigma_j^2 = \lambda_j$ for $j = 1, 2, 3$, and $\x_j$, $j = 1, 2, 3$ are associated eigenvectors of $\U$.   Hence L-eigenvalues of $\A$ are invariants of $\A$.

3. We have
\begin{equation} \label{e4}
\A = \sum_{j=1}^3 \sigma_j \x_j \otimes \V_j.
\end{equation}

4. $\A^{-1}$ exists if and only if $\sigma_1 \ge \sigma_2 \ge \sigma_3 > 0$.    In this case, we have
\begin{equation} \label{e5}
\A^{-1} = \sum_{j=1}^3 {1 \over \sigma_j} \V_j \otimes \x_j.
\end{equation}

5. We have
\begin{equation} \label{e6}
\sigma_1 = \max \{ \sqrt{(\A \V) \bullet (\A \V)} : \V \bullet \V = 1 \}.
\end{equation}

6. If $\A$ is right-side symmetric, then all the L-eigentensors associated with its positive L-eigenvalues are symmetric.

7. If $\A\A^\T = \alpha \I$ for some $\alpha \ge 0$, then the L-eigenvalues of $\A$ are $\sigma_1 = \sigma_2 = \sigma_3 = \sqrt{\alpha}$.  In particular, if $\A$ is a third order orthogonal tensor, then its L-eigenvalues are $\sigma_1 = \sigma_2 = \sigma_3 = 1$.
\end{Theorem}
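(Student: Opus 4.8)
The plan is to reduce the entire statement to the singular value decomposition (SVD) of the flattened matrix of $\A$, exactly as in the proofs of Theorems \ref{t1} and \ref{t2}. Fix an orthonormal basis so that $\A$ is represented by $a_{ijk}$, collapse the index pair $(j,k)$ into a single index $r = 1, \dots, 9$, and let $A = (a_{ir})$ be the resulting $3 \times 9$ matrix. Write $\mathbf{v} \in \Re^9$ for the flattening of the matrix $v_{jk}$ representing $\V$, and $\mathbf{x} \in \Re^3$ for the vector representing $\x$. The first task is to verify, using operations 2 and 4 and the index description of the transpose $\A^\T$ from Section 3, that the defining system \eqref{e3} is precisely the statement that $(\sigma,\mathbf{x},\mathbf{v})$ is a singular triple of $A$, namely
$$A\mathbf{v} = \sigma \mathbf{x}, \quad A^\top \mathbf{x} = \sigma \mathbf{v}, \quad \|\mathbf{v}\| = \|\mathbf{x}\| = 1, \quad \sigma \ge 0.$$
Since $\V \bullet \V = \mathbf{v}^\top \mathbf{v}$, $\x \bullet \x = \mathbf{x}^\top \mathbf{x}$ and $\A\bullet\A$-type contractions are basis-independent (operation 12 and Section 2), this correspondence is intrinsic. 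With it in hand I would invoke the decomposition $A = S\Sigma T$ from the proof of Theorem \ref{t1}, whose diagonal entries $\sigma_1 \ge \sigma_2 \ge \sigma_3 \ge 0$ are the three singular values of $A$, hence the three L-eigenvalues of $\A$.

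Parts 1 and 2 then follow immediately: the columns $\x_1,\x_2,\x_3$ of $S$ are orthonormal in $\Re^3$ and the first three rows $\V_1,\V_2,\V_3$ of $T$ are orthonormal in $\Re^9$, giving $\x_i\bullet\x_j = \delta_{ij}$ and $\V_i\bullet\V_j = \delta_{ij}$; moreover $U = AA^\top$ has the $\x_j$ as eigenvectors with eigenvalues $\sigma_j^2$, so $\lambda_j = \sigma_j^2$. The invariance assertion in Part 2 is because $\U = \A\A^\T$ is intrinsically defined, so its eigenvalues $\lambda_j$, and hence $\sigma_j = \sqrt{\lambda_j}$, depend only on $\A$. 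For Part 3, reading the rank-one expansion $A = \sum_l \sigma_l \mathbf{x}_l \mathbf{v}_l^\top$ back through the flattening and operation 9 gives $a_{ijk} = \sum_l \sigma_l (x_l)_i (v_l)_{jk}$, which is exactly \eqref{e4}. Part 5 is the variational characterization $\sigma_1 = \max_{\|\mathbf{v}\|=1}\|A\mathbf{v}\|$ together with the identity $(\A\V)\bullet(\A\V) = \|A\mathbf{v}\|^2$.

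For Part 4 I would chain Theorems \ref{t1} and \ref{t2}: $\A^{-1}$ exists iff $\A$ is nonsingular iff $\U$ is positive definite iff $\lambda_3 > 0$ iff $\sigma_3 > 0$. When all $\sigma_j > 0$, the candidate $\B = \sum_l \sigma_l^{-1}\,\V_l \otimes \x_l$ flattens (via operation 8) to the Moore--Penrose inverse $A^+ = \sum_l \sigma_l^{-1}\mathbf{v}_l \mathbf{x}_l^\top = T^\top \Sigma^+ S^\top$, so it satisfies the four Moore--Penrose conditions, which are precisely \eqref{e1} and \eqref{e2}; by the uniqueness established in Theorem \ref{t2}, this $\B$ equals $\A^{-1}$, yielding \eqref{e5}. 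Part 7 is a direct specialization: if $\A\A^\T = \alpha\I$ then $\U = \alpha\I$ has the single eigenvalue $\alpha$ with multiplicity three, so $\sigma_1 = \sigma_2 = \sigma_3 = \sqrt{\alpha}$, and $\alpha = 1$ for an orthogonal tensor.

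Part 6 is the one conclusion that does not fall out of the SVD and needs a separate short argument. If $\A$ is right-side symmetric then $a_{ijk} = a_{ikj}$, and for $\sigma > 0$ the second equation of \eqref{e3} reads $\sigma v_{jk} = a_{ijk}x_i = a_{ikj}x_i = \sigma v_{kj}$, so $v_{jk} = v_{kj}$ and $\V$ is symmetric. I expect the main obstacle of the whole proof to be the careful bookkeeping of the transpose when establishing the matrix correspondence: one must check that $\A^\T\x = \sigma\V$ genuinely transposes $A$ (rather than merely permuting its columns), since the three slots of $\A$ are mixed by the transpose. A secondary point to handle cleanly is the repeated-singular-value case, so that orthonormal families $\x_j$ and $\V_j$ can still be selected in Parts 1--3; once both are dispatched, the seven parts reduce to standard facts about the SVD and the Moore--Penrose inverse.
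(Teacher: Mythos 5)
Your proposal is correct and follows essentially the same route as the paper: the paper's own proof of Theorem \ref{t3} is only a sketch, invoking the unfolding of $a_{ijk}$ into a $3 \times 9$ matrix together with the SVD and Moore--Penrose theory of \cite{WWQ-04} for parts 1--5, the relation $\A^\T\x = \sigma\V$ with $\sigma \neq 0$ for part 6, and the eigenvalues of $\alpha\I$ for part 7. Your write-up simply supplies the bookkeeping the paper omits with ``we do not go to the details,'' including the worthwhile check that $\A^\T\x = \sigma\V$ flattens to $A^\top \mathbf{x} = \sigma \mathbf{v}$ (since $(\A^\T)$ is represented by $a_{jki}$, one indeed gets the genuine matrix transpose of $A$).
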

\begin{proof}  For 1-5, we may use the unfolding technique in the proofs of Theorems \ref{t1} and \ref{t2}, the singular value decomposition theory \cite{WWQ-04} to prove them.  We do not go to the details.  Conclusion 6 follows from $\A^\T\x = \sigma\V$ in (\ref{e3}) and $\sigma \not = 0$ directly.   Since the eigenvalues of $\alpha\I$ are $\lambda_1 = \lambda_2 = \lambda_3 = \alpha$, we have conclusion 7.
\end{proof}

If $\A\A^\T = \alpha \I$ for some $\alpha \ge 0$, then the three L-eigenvalues of $\A$ are the same.   On the other hand, if the three L-eigenvalues of a third order tensor $\A$ are the same, do we always have $\A\A^\T = \alpha \I$ for some $\alpha \ge 0$?

We call (\ref{e4}) and (\ref{e5}) the L-eigenvalue decomposition of $\A$ and its L-inverse.  Note that we do not say that $1 \over \sigma_j$ for $j = 1, 2, 3$ are L-eigenvalues of $\A^{-1}$, as $\A^{-1}$ may have different L-eigenvalues.

We do not call $\sigma_j, j = 1, 2, 3$ singular values, though they are associated with the singular value decomposition theory of rectangular matrices.   One reason is that they are third order tensors, not rectangular matrices.  They are only associated with rectangular matrices unfolded from the representative hypermatrices unfolded with respect to the last two indices.   If we unfold the related hypermatrices with respect to the other two indices, the results may be different.   The second reason is that we will define singular values for third order tensors later, which are different.

Equation (\ref{e6}) has clear physical meanings.  Suppose that $\A$ is the piezoelectric tensor and $V$ is the stress tensor.   Then (\ref{e6}) says that the largest L-eigenvalue of the piezoelectric tensor gives the largest magnitude of the electric polarization vector under unit stress tensors.

\bigskip

For a third order tensor $\A \in T(\HH)$, $\{ \V \in B(\HH) : \A\V = \0 \}$ is a linear subspace of $B(\HH)$.   We call it the null space of $\A$.  By the unfolding technique in the proofs of Theorems \ref{t1} and \ref{t2}, we have the following proposition.

\begin{Proposition} \label{p3}
Let $\A \in T(\HH)$.   The dimension of its null space is at least $6$.    The sum of its rank and the dimension of its null space is $9$.
\end{Proposition}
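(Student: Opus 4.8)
The plan is to reduce everything to the $3 \times 9$ unfolding already used in the proofs of Theorems \ref{t1} and \ref{t2}. Fix an orthonormal basis under which $\A$ is represented by $a_{ijk}$, and collapse the pair $(j,k)$ into a single index $r = 1, \ldots, 9$ so that $a_{ijk}$ becomes a $3 \times 9$ matrix $A = (a_{ir})$. A second order tensor $\V$ with representative matrix $v_{jk}$ corresponds under the same collapsing to a vector $v \in \Re^9$, and the defining relation $\A\V = \0$, namely $a_{ijk}v_{jk} = 0$ for all $i$, is precisely the linear system $Av = \0$. Since the vectorization $\V \mapsto v$ is a linear isomorphism from $B(\HH)$ onto $\Re^9$, it carries the null space of $\A$ bijectively onto $\ker A$, so these two subspaces have the same dimension.

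Next I would apply the rank--nullity theorem to $A$. Because $A$ has only three rows, its rank is at most $3$, so the dimension of its kernel is $9 - \mathrm{rank}(A) \ge 9 - 3 = 6$, which gives the first assertion. For the second assertion I recall that the rank of $\A$ was defined as the rank of its kernel tensor $\U = \A\A^\T$, whose representative matrix is $u_{il} = a_{ijk}a_{jkl} = (AA^\T)_{il}$. The key identity is the standard fact $\mathrm{rank}(AA^\T) = \mathrm{rank}(A)$, which holds because $AA^\T$ and $A^\T$ have the same kernel: if $AA^\T w = \0$ then $w^\T A A^\T w = 0$, so $A^\T w = \0$, whence $\mathrm{rank}(AA^\T) = \mathrm{rank}(A^\T) = \mathrm{rank}(A)$. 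Combining, $\mathrm{rank}(\A) = \mathrm{rank}(\U) = \mathrm{rank}(AA^\T) = \mathrm{rank}(A)$, and therefore the rank of $\A$ plus the dimension of its null space equals $\mathrm{rank}(A) + (9 - \mathrm{rank}(A)) = 9$.

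Since each ingredient is a routine statement about the fixed unfolding, the only point that truly needs care is to keep the two notions of ``rank'' and ``null space'' aligned with the correct matrix objects: the null space of $\A$ must be matched with $\ker A$ inside $\Re^9$, whereas the rank of $\A$ is read off from the much smaller $3 \times 3$ matrix $AA^\T$. The identity $\mathrm{rank}(AA^\T) = \mathrm{rank}(A)$ is exactly what reconciles these two unfoldings, and I would expect this reconciliation, rather than any computation, to be the substantive step of the argument.
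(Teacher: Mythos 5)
Your proposal is correct and takes essentially the same route the paper intends: Proposition \ref{p3} is stated there with only the remark that it follows ``by the unfolding technique in the proofs of Theorems \ref{t1} and \ref{t2},'' and your argument is precisely that technique made explicit — vectorize $\V$ so that the null space of $\A$ becomes $\ker A$ for the $3 \times 9$ unfolding $A$, apply rank--nullity, and reconcile the paper's definition $\mathrm{rank}(\A) = \mathrm{rank}(\U)$ with $\mathrm{rank}(A)$ via $\mathrm{rank}(AA^\T) = \mathrm{rank}(A)$ (which the paper obtains implicitly through the singular value decomposition $A = S\Sigma T$). No gaps; your version is simply a fully written-out form of the paper's one-line justification.
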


This proposition implies that the null space of a third order tensor is quite ``large'', as its dimension is at least $6$.   Suppose that $\A$ is the piezoelectric tensor.   It is important to identify its null space.   If the stress tensor falls in the null space of the piezoelectric tensor, then there is no piezoelectric effect.

\section{Eigenvector Decomposition and Invariants}

For third order partially symmetric tensors, we have the following theorem which is stronger than the sixth conclusion of Theorem \ref{t3}.

\begin{Theorem} \label{t3.1}
Suppose that $\A \in T(\HH)$ is right-side symmetric.   Then $\A$ has three L-eigenvalues $\sigma_1 \ge \sigma_2 \ge \sigma_3 \ge 0$, with three symmetric L-eigentensors $\V_1, \V_2, \V_3$, and L-eigenvectors $\x_1, \x_2, \x_3$, associated with $\sigma_1, \sigma_2$ and $\sigma_3$ correspondingly.
\end{Theorem}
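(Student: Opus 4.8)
The plan is to build on Theorem~\ref{t3}, which already produces three L-eigenvalues $\sigma_1 \ge \sigma_2 \ge \sigma_3 \ge 0$ together with orthonormal L-eigenvectors $\x_1,\x_2,\x_3 \in \HH$ and L-eigentensors $\V_1,\V_2,\V_3 \in B(\HH)$ satisfying (\ref{e3}). The L-eigenvalues are fixed invariants, so the whole task is to exhibit a choice of the three L-eigentensors that are all symmetric. Conclusion~6 of Theorem~\ref{t3} disposes of the L-eigentensors attached to positive L-eigenvalues, since there $\V_j = \frac{1}{\sigma_j}\A^\T\x_j$ is forced and is symmetric. The genuine difficulty, and the place I expect to spend the real work, is the L-eigentensors belonging to vanishing L-eigenvalues, where $\A^\T\x_j = \OO$ leaves $\V_j$ undetermined and the symmetry must be arranged by hand.

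The structural fact I would establish first is that right-side symmetry $a_{ijk} = a_{ikj}$ forces the product $\A\V$ to depend only on the symmetric part of $\V$, because contracting the $(j,k)$-symmetric hypermatrix against the anti-symmetric part of $v_{jk}$ vanishes. Equivalently, reshaping $a_{ijk}$ into the $3\times 9$ matrix $A=(a_{ir})$ of the proofs of Theorems~\ref{t1} and~\ref{t2}, every row slice $a_{i\cdot\cdot}$ is a symmetric matrix, so the row space $(\ker A)^\perp$ lies entirely inside the $6$-dimensional space $\mathrm{Sym}$ of symmetric second order tensors. Since the L-eigentensors with positive L-eigenvalues span exactly this row space, this re-proves their symmetry and, more usefully, prepares the dimension count for the remaining ones.

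Writing $r$ for the rank of $\A$ (the number of positive $\sigma_j$), I would then split $\mathrm{Sym}$ orthogonally as $(\ker A)^\perp \oplus (\mathrm{Sym}\cap\ker A)$: for symmetric $\V$ the projection onto $(\ker A)^\perp \subseteq \mathrm{Sym}$ is symmetric, hence so is the residual component, which therefore lies in $\mathrm{Sym}\cap\ker A$. This forces $\dim(\mathrm{Sym}\cap\ker A) = 6 - r \ge 3 - r$. I would use this room to pick $3-r$ orthonormal symmetric tensors $\V_{r+1},\dots,\V_3$ inside $\mathrm{Sym}\cap\ker A$; they are automatically orthogonal to $\V_1,\dots,\V_r$, which live in $(\ker A)^\perp$. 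For these indices $\A\V_j = \0 = \sigma_j\x_j$, and taking the matching $\x_j$ as an orthonormal basis of the left null space $\ker A^\T$ gives $\A^\T\x_j = \OO = \sigma_j\V_j$ as well, so that the full family $\V_1,\dots,\V_3$, $\x_1,\dots,\x_3$ satisfies all of (\ref{e3}) and the orthonormality of part~1 of Theorem~\ref{t3}, now with three symmetric L-eigentensors.

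The main obstacle is exactly this zero-eigenvalue regime: Theorem~\ref{t3}(6) is silent there, so the symmetric eigentensors have to be manufactured from $\mathrm{Sym}\cap\ker A$, and one must then confirm they still solve the entire L-eigenvalue system rather than merely lying in the null space of $\A$. The one point requiring care is compatibility of the two choices --- the symmetric null tensors $\V_j$ and the completion vectors $\x_j$ --- so that $\A^\T\x_j=\OO$ holds simultaneously with $\A\V_j=\0$; this is guaranteed because for $j>r$ the pair $(\V_j,\x_j)$ is precisely a singular pair for the vanishing singular value, with $\V_j\in\ker A$ and $\x_j\in\ker A^\T$.
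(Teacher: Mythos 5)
Your proof is correct, but it takes a genuinely different route from the paper. The paper handles the singular case by perturbation: it approximates a singular right-side symmetric $\A$ by nonsingular (implicitly right-side symmetric) tensors $\A_k$, applies Conclusion 6 of Theorem \ref{t3} to each $\A_k$, and then uses continuity of singular values together with compactness of the unit L-eigenvectors and L-eigentensors to extract limit points, observing that limits of symmetric matrices are symmetric. You instead give a direct linear-algebra construction: after disposing of the positive L-eigenvalues via Conclusion 6 (or, as you note, via the observation that right-side symmetry makes every slice $a_{i\cdot\cdot}$ symmetric, so the row space of the $3\times 9$ unfolding $A$ lies in the six-dimensional space $\mathrm{Sym}$), you decompose $\mathrm{Sym}$ orthogonally as the row space plus $\mathrm{Sym}\cap\ker A$, count $\dim(\mathrm{Sym}\cap\ker A)=6-r\ge 3-r$, and select the missing eigentensors there, pairing them with an orthonormal basis of $\ker A^\T$ so that the full system (\ref{e3}) holds with $\sigma_j=0$. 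Your dimension count and the pairing both check out: the residual of a symmetric tensor after projecting onto the (symmetric) row space is again symmetric, and for $j>r$ the conditions $\A\V_j=\0$ and $\A^\T\x_j=\OO$ are exactly membership in $\ker A$ and $\ker A^\T$. What each approach buys: your argument is constructive, stays entirely within finite-dimensional linear algebra, and avoids a point the paper leaves implicit --- that nonsingular right-side symmetric approximants $\A_k$ exist (the singular tensors form a proper algebraic subset of the linear space of right-side symmetric tensors, but the paper does not say this, nor that the $\A_k$ must be chosen right-side symmetric for their L-eigentensors to be symmetric); the paper's limiting argument, on the other hand, is shorter given Theorem \ref{t3} and illustrates a perturbation technique that transfers to other statements stable under limits.
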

\begin{proof} If $\A$ is nonsingular, then all of its three L-eigenvalues are positive.  The conclusion follows from Conclusion 6 of Theorem \ref{t3}.   Suppose that $\A$ is singular, with L-eigenvalues $\sigma_1 \ge \sigma_2 \sigma_3 = 0$.   We may construct sequences $\{ \A_k \in T(\HH) : k = 1, 2, \cdots \}$ such that $\A_k$ is nonsingular for $k = 1, 2, \cdots$, and the representative hypermatrix of  $\A_k$ converges to the representative hypermatrix of  $\A$, under an orthonormal basis.   Suppose that $\A_k$ has three L-eigenvalues $\sigma_{1, k} \ge \sigma_{2, k} \ge \sigma_{3, k} > 0$, with associated L-eigenvector $\x_{1, k}, \x_{2, k}, \x_{3, k}$ and symmetric L-eigentensor $\V_{1, k}, \V_{2, k}, \V_{3, k}$ for $k = 1, 2, \cdots$.
Singular values of matrices are roots of eigenvalues of square matrices, which are continuous with respect to the entries of such square matrices.  Thus singular values of matrices are also continuous with respect to the entries of such matrices.  This implies that $\sigma_{i, k}$ converges to $\sigma_i$ as $k$ tends to infinity, for $i  = 1, 2, 3$.   Since $\x_{i, k} \bullet \x_{i, k} = 1$ and $\V_{i, k} \bullet \V_{i, k} = 1$ for $i = 1, 2, 3$ and $k = 1, 2, \cdots$, their representative vectors and representative matrices have limiting points for $i = 1, 2, 3$. Taking such limiting points, we have $\x_1, \x_2, \x_3 \in \HH$ and $\V_1, \V_2, \V_3 \in B(\HH)$ such that they are L-eigenvectors and L-eigentensors of $\A$, associated with L-eigenvalues $\sigma_1, \sigma_2, \sigma_3$, respectively.   However, such $\V_1, \V_2, \V_3$ should be symmetric as their representative matrices are limiting points of symmetric matrices.
\end{proof}

Suppose that $\V_1, \V_2, \V_3$ are such symmetric L-eigentensors.  Let the eigenvalues of $\V_i$ be $\lambda_{i, 1}, \lambda_{i, 2}, \lambda_{i, 3}$ with associated eigenvectors $\y_{i, 1}, \y_{i, 2}, \y_{i, 3}$ such that $\y_{i, j} \bullet \y_{i, k} = \delta_{jk}$ for $i, j, k = 1, 2, 3$.   By (\ref{decomp}), we have
$$\V_j = \sum_{k=1}^3 \lambda_{j, k} \y_{j, k} \otimes \y_{j, k},$$
for $j = 1, 2, 3$.  Substituting them to (\ref{e4}), we have
\begin{equation} \label{third-decomp}
\A = \sum_{j, k=1}^3 \sigma_j \lambda_{j, k} \x_j \otimes \y_{j, k} \otimes \y_{j, k}.
\end{equation}
Here, $\sigma_1 \ge \sigma_2 \ge \sigma_3$, $x_j \bullet \x_k = \delta_{jk}, \y_{i, j} \bullet \y_{i, k} = \delta_{jk}$ for $i, j, k = 1, 2, 3$.

Suppose that $\A \in T(\HH)$ is left-side symmetric.   Then $(\A^\top)^\top$ is right-side symmetric.   Then, we have
$$(\A^\top)^\top = \sum_{j, k=1}^3 \bar \sigma_j \bar \lambda_{j, k} \bar \x_j \otimes \bar \y_{j, k} \otimes \bar \y_{j, k},$$
where $\bar \sigma_1 \ge \bar \sigma_2 \ge \bar \sigma_3$, $\bar x_j \bullet \bar \x_k = \delta_{jk}, \bar \y_{i, j} \bullet \bar \y_{i, k} = \delta_{jk}$ for $i, j, k = 1, 2, 3$.
By Proposition \ref{p1}, we have
\begin{equation} \label{third-decomp-1}
\A = \sum_{j, k=1}^3 \bar \sigma_j \bar \lambda_{j, k}  \bar \y_{j, k} \otimes \bar \y_{j, k} \otimes \bar \x_j.
\end{equation}

Similarly, for a third order centrally symmetric tensor $\A$, we have
\begin{equation} \label{third-decomp-2}
\A = \sum_{j, k=1}^3 \hat \sigma_j \hat \lambda_{j, k}  \hat \y_{j, k} \otimes \hat \x_j \otimes \hat \y_{j, k},
\end{equation}
where $\hat \sigma_1 \ge \hat \sigma_2 \ge \hat \sigma_3$, $\hat x_j \bullet \hat \x_k = \delta_{jk}, \hat \y_{i, j} \bullet \hat \y_{i, k} = \delta_{jk}$ for $i, j, k = 1, 2, 3$.

Thus, a third order symmetric tensor $\A$ has all of these three eigenvector decomposition.

\medskip

Let $\lambda_i \in \Re, \x_i \in \HH$ for $i = 1, 2, 3$ such that $\x_i \bullet \x_j = \delta_{ij}$ for $i, j = 1, 2, 3$.  Let
\begin{equation} \label{sym-decomp}
\A = \sum_{i=1}^3 \lambda_i  \x_i \otimes \x_i \otimes \x_i.
\end{equation}
Then $\A$ is third order symmetric tensor.   We call such a symmetric tensor $\A$ a third order primarily symmetric tensor.   Can any third order symmetric tensor be decomposed as the sum of such third order primarily symmetric tensors?

\medskip

Let $\lambda_i \in \Re, \x_i \in \HH$ for $i = 1, 2, 3$ such that $\x_i \bullet \x_j = \delta_{ij}$ for $i, j = 1, 2, 3$.  Let
\begin{equation} \label{cyc-sym-decomp}
\A =  \lambda_1  \x_1 \otimes \x_2 \otimes \x_3 + \lambda_2 \x_2 \otimes \x_3 \otimes \x_1 + \lambda_3  \x_3 \otimes \x_1 \otimes \x_2.
\end{equation}
Then $\A$ is third order cyclically symmetric tensor.   We call such a symmetric tensor $\A$ a third order primarily cyclically symmetric tensor.   Can any third order cyclically symmetric tensor be decomposed as the sum of such third order primarily cyclically symmetric tensors?

\bigskip

In the discussion of eigenvector decompositions (\ref{third-decomp}-\ref{third-decomp-2}), we see that for a third order tensor $\A$, its kernel tensor $\U$, the kernel tensor $\bar \U$ of $\A^\top$, and the kernel tensor $\hat \U$ of $(\A^\top)^\top$ play important roles.   The invariants tr$(\U)$, tr$(\U^2)$, tr$(\U^3)$ of $\U$, the invariants tr$(\bar \U)$, tr$(\bar \U^2)$, tr$(\bar \U^3)$ of $\bar \U$, and the invariants tr$(\hat \U)$, tr$(\hat \U^2)$, tr$(\hat \U^3)$ of $\hat \U$ are invariants of $\A$.   They are polynomials of entries of the representative matrices of $\U, \bar \U, \hat \U$, thus polynomials of entries of the representative hypermatrix $\A$.  They are easily computable.    Suppose the representative hypermatrix of $\A$ is $a_{ijk}$.  Then
$${\rm tr}(\U) = {\rm tr}(a_{ijk}a_{ljk}) = a_{ijk}a_{ijk} = \A \bullet \A.$$
Similarly, we can show that ${\rm tr}(\bar \U) = \A \bullet \A$ and ${\rm tr}(\hat \U) = \A \bullet \A$.
In this way, we have seven invariants tr$(\U)$, tr$(\U^2)$, tr$(\U^3)$, tr$(\bar \U^2)$, tr$(\bar \U^3)$, tr$(\hat \U^2)$, tr$(\hat \U^3)$  of $\A$.   Surely these seven invaiants are not complete in the sense of \cite{Zh-94}.   If $\A$ is cyclically symmetric, then we have $\U = \bar \U = \hat \U$, and hence tr$(\U^2) =$ tr$(\bar \U^2) =$ tr$(\hat \U^2)$ and tr$(\U^3) =$ tr$(\bar \U^3) =$ tr$(\hat \U^3)$.  In general, do we have some relations among these seven invariants?

\bigskip

A first order $n$-dimensional tensor only has one independent invariant, though its representative vector has $n$ independent components.   A second order tensor $\U \in B(\HH)$ has only six independent invariants, though its representative matrix has nine independent entries.  If $\U$ is symmetric, then it has only three independent invariants, though its representative symmetric matrix has six independent entries.   See \cite{HXL-04, Zh-94}.

The representative hypermatrix of a third order $\A \in T(\HH)$ has 27 independent entries.   If $\A$ is partially symmetric, or cyclically symmetric, or symmetric,  then its representative hypermatrix has 18, or 11, or 10 independent entries, respectively.  These are easy to be seen.   Then how many independent invariants do such third order tensors have?

We make the following conjecture and will explain the reason in the next section.

\begin{Conjecture}  \label{conj-2}
If $\A \in T(\HH)$ is symmetric, then it has seven independent invariants.
\end{Conjecture}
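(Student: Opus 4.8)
The plan is to reduce Conjecture \ref{conj-2} to a dimension count for the orbits of the orthogonal group, exactly as in the second order cases cited from \cite{HXL-04, Zh-94} (a symmetric second order tensor lives in a $6$-dimensional space and has $6-3=3$ invariants). Regard the space $\mathcal{S}$ of third order symmetric tensors as a real vector space of dimension $10$, since a fully symmetric hypermatrix $a_{ijk}$ in dimension $3$ has $\binom{5}{3}=10$ independent entries. The rotation group $SO(3)$ acts on $\mathcal{S}$ through the transformation law $a_{ijk} = p_{iq}p_{jr}p_{ks}b_{qrs}$, and a scalar invariant of $\A$ is precisely a polynomial (or rational) function on $\mathcal{S}$ that is constant along $SO(3)$-orbits. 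By the standard result that the number of algebraically independent invariants equals the transcendence degree of the invariant field, which in turn equals $\dim\mathcal{S}$ minus the dimension of a generic orbit (Rosenlicht's theorem), it suffices to show that the generic $SO(3)$-orbit in $\mathcal{S}$ has dimension $3$. Since $\dim SO(3)=3$, a generic orbit has dimension at most $3$, so the whole problem collapses to showing that this maximum is attained, i.e. that the generic symmetric tensor has a finite (zero-dimensional) rotational stabilizer.

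To control the orbit dimension I would pass to the infinitesimal action. For an antisymmetric matrix $X$ (an element of $\mathfrak{so}(3)$) the derivative of the orbit map at $\A$ is the linear map
\[
X \longmapsto (X\cdot\A)_{ijk} = X_{ip}a_{pjk} + X_{jp}a_{ipk} + X_{kp}a_{ijp},
\]
from the $3$-dimensional space $\mathfrak{so}(3)$ into $\mathcal{S}$, and the orbit dimension at $\A$ equals the rank of this map. Its kernel is the Lie algebra of the stabilizer, so I must show that for generic $\A$ the only $X$ with $X\cdot\A=0$ is $X=0$. A nonzero $X\in\mathfrak{so}(3)$ generates a one-parameter group of rotations about a fixed axis, so $X\cdot\A=0$ forces $\A$ to be invariant under all rotations about that axis, i.e. axially symmetric.

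The heart of the argument is then a dimension estimate for the axially symmetric locus. Identifying $\A$ with the ternary cubic $f(x_1,x_2,x_3)=a_{ijk}x_ix_jx_k$ and fixing the axis to be $\e_3$, invariance under rotations in the $(x_1,x_2)$-plane forces $f = \alpha x_3^3 + \beta x_3(x_1^2+x_2^2)$, because the only homogeneous $SO(2)$-invariants in $(x_1,x_2)$ are powers of $x_1^2+x_2^2$, so the odd-degree parts must vanish. Hence the tensors axially symmetric about a fixed axis form a $2$-dimensional subspace, and letting the axis range over the $2$-dimensional space of directions (a copy of $\mathbb{RP}^2$) shows that the union of all axially symmetric tensors is the image of a family of dimension at most $2+2=4$. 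Since $4 < 10 = \dim\mathcal{S}$, this locus is a proper closed subvariety, so a generic $\A$ is not axially symmetric about any axis and therefore has trivial infinitesimal stabilizer. Consequently the generic orbit has dimension $3$, and the number of independent invariants is $10-3=7$.

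The step I expect to be the main obstacle is making the invariant-theory bookkeeping fully rigorous over the reals: the equality ``number of independent invariants $=$ codimension of the generic orbit'' is cleanest for algebraic groups over algebraically closed fields, so I would complexify to $\mathcal{S}\otimes\COMPLEX$ and $SO(3,\COMPLEX)$, verify that the generic complex orbit still has dimension $3$ (the infinitesimal computation and the axial-symmetry count go through verbatim), and then argue that the real and complex transcendence degrees coincide. A secondary technical point is confirming that extending the group from $SO(3)$ to the full orthogonal group $O(3)$ (which for odd order tensors includes the central inversion $\A\mapsto-\A$ noted earlier in the paper) changes neither the orbit dimension nor the count, since these extra elements form only a finite set of components.
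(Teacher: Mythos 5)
You should first note a mismatch of genre: the statement you were asked to prove is labelled a \emph{Conjecture}, and the paper contains no proof of it at all. Its only stated justification appears at the end of Section 7: by \cite{CS-13}, a generic third order three dimensional symmetric hypermatrix has seven E-eigenvalues, and since these give rise to invariants, the author \emph{conjectures} that a third order symmetric tensor has seven independent invariants. So your proposal cannot be ``the same approach as the paper'' --- you have supplied an actual argument where the paper offers only a heuristic, and your route (orbit-dimension counting for the $SO(3)$-action) is the standard rigorous way to establish such a count, quite different from the eigenvalue-counting motivation in the text.

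On the merits, your skeleton is essentially sound for the natural reading of the conjecture: $\dim\mathcal{S}=\binom{5}{3}=10$ is correct; the reduction via Rosenlicht to showing the generic orbit has dimension $3$ is correct; the infinitesimal action formula and the identification of its kernel with the stabilizer algebra are correct; and your normal form for axially symmetric cubics, $f=\alpha x_3^3+\beta x_3(x_1^2+x_2^2)$, together with the $2+2=4<10$ dimension count, correctly shows the real axially symmetric locus is proper. Two caveats remain. First, your claim that the complexified computation goes through ``verbatim'' is not quite right: $\mathfrak{so}(3,\COMPLEX)\cong\mathfrak{sl}(2,\COMPLEX)$ contains nonzero \emph{nilpotent} elements, whose axis vector is isotropic and whose one-parameter groups are not rotations in the semisimple sense, so the $SO(2)$-weight argument does not apply to them. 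This case must be handled separately, e.g.\ via the harmonic decomposition of ternary cubics $10=7+3$ into irreducibles of spins $3$ and $1$: any nonzero nilpotent acts on each irreducible with one-dimensional kernel, so the cubics it annihilates form a $2$-dimensional space, and the corresponding locus has dimension at most $1+2=3$, again proper --- the conclusion survives, but the step is genuinely missing from your write-up. Second, you should fix the meaning of ``seven independent invariants'': your argument yields the transcendence degree of the invariant field, i.e.\ the maximal number of \emph{functionally} independent invariants, not a complete integrity basis in the sense of \cite{Zh-94} (which may require more than seven generators); this reading is consistent with the paper, which itself remarks that its seven explicit trace invariants are ``not complete'' in that sense. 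With these two repairs, your proposal would upgrade the paper's conjecture to a theorem about functional independence, which is more than the paper itself attempts.
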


Then we raise the following question.

\begin{Question}
If $\A \in T(\HH)$, then how many independent invariants does it have?  If it is partially symmetric, then how many independent invariants does it have? If it is cyclically symmetric, then how many independent invariants does it have?
\end{Question}

\section{Singular Values, C-Eigenvalues and Z-Eigenvalues}

We now define singular values of a third order tensor.

Let $\A \in T(\HH),  \x, \y, \z  \in \HH$ and $\eta \in \Re$.    We say that $\eta$ is a singular value of $\A$, $\x, \y$ and $\z$ are associated left, central and right singular vectors respectively if $\eta \ge 0$ and the following equations hold.
\begin{equation} \label{e7}
\A\y\z = \eta \x, \ \ \ \x\A\z = \eta \y, \ \ \ \x\y\A = \eta\z, \ \ \ \x \bullet \x = 1, \ \ \  \y \bullet \y = 1, \ \ \  \z \bullet \z = 1.
\end{equation}

\begin{Theorem} \label{t4}   Suppose that $\A \in T(\HH)$.   Then $\A$ always has singular values.  They are invariants of $\A$.    Let $\eta$ be a singular value of $\A$, with associated left, central and right singular vectors $\x, \y$ and $\z$.    Then $\eta = \x\A\y\z$.   Let $\eta_1$ be the maximum singular value of $\A$.    Then
\begin{equation} \label{e8}
\eta_1 = \max \{ \x\A\y\z : \x \bullet \x = 1, \y \bullet \y = 1, \z \bullet \z = 1 \}.
\end{equation}
\end{Theorem}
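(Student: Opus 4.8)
The plan is to obtain the singular values as the critical values of the trilinear form
$$f(\x,\y,\z) = \x\A\y\z$$
on the product of three unit spheres $\SPHERE \times \SPHERE \times \SPHERE$, and to read off the defining system (\ref{e7}) from the first-order (Lagrange multiplier) optimality conditions.

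For existence, I would note that $f$ is a polynomial, hence continuous, in the representative components $x_i, y_j, z_k$, while the constraint set $\{ \x\bullet\x = \y\bullet\y = \z\bullet\z = 1 \}$ is compact; thus $f$ attains a maximum. At any maximizer---indeed at any constrained critical point---the Lagrange condition holds, since the three constraint gradients are linearly independent (each of $\x,\y,\z$ is a unit vector, hence nonzero). Introducing multipliers $\alpha,\beta,\gamma$ and differentiating $f - \tfrac{\alpha}{2}(\x\bullet\x) - \tfrac{\beta}{2}(\y\bullet\y) - \tfrac{\gamma}{2}(\z\bullet\z)$ in $\x$, $\y$, $\z$ separately yields exactly $\A\y\z = \alpha\x$, $\x\A\z = \beta\y$, and $\x\y\A = \gamma\z$. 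Contracting each equation with the corresponding unit vector and using the norm constraints forces $\alpha = \beta = \gamma = \x\A\y\z =: \eta$. This simultaneously proves $\eta = \x\A\y\z$ and shows that the triple satisfies (\ref{e7}), establishing existence once one checks $\eta \ge 0$, which I address below.

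The identity $\eta = \x\A\y\z$ for an arbitrary singular value is immediate even without the variational picture: contracting $\A\y\z = \eta\x$ with $\x$ gives $\x\A\y\z = \eta(\x\bullet\x) = \eta$. For invariance, I would invoke operation 11 of Section 3, by which the scalar $\x\A\y\z$ is independent of the basis, together with the basis-independence of the inner product; hence the entire critical-value set of $f$ over the (basis-independent) product of unit spheres is an invariant of $\A$, and in particular each singular value is. For the maximum formula (\ref{e8}) I would argue in two directions: every singular value arises as $\eta = \x\A\y\z$ at some feasible triple, so every singular value is at most $\max f$; conversely, the global maximizer is a constrained critical point, so by the computation above it produces a singular value equal to $\max f$, and this value is nonnegative since $f(-\x,\y,\z) = -f(\x,\y,\z)$ forces $\max f \ge 0$ (the same sign-flip guarantees that the existence argument yields a legitimate $\eta \ge 0$). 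Combining the two bounds, the largest singular value $\eta_1$ equals $\max f$, which is (\ref{e8}).

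The main obstacle---really the only nonroutine step---is the coincidence of the three Lagrange multipliers. Everything hinges on contracting each stationarity equation with its own variable and exploiting the unit-norm constraints to collapse $\alpha,\beta,\gamma$ to the single value $\x\A\y\z$; once this is in hand, existence, the evaluation formula $\eta = \x\A\y\z$, and the variational characterization all follow, and invariance is a formal consequence of the basis-independence recorded in Section 3.
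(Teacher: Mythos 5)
Your proposal is correct and follows essentially the same route as the paper: maximize the trilinear form $\x\A\y\z$ over the product of unit spheres, derive the Lagrange stationarity system, and contract each equation with its own unit vector to force the three multipliers to coincide with $\x\A\y\z$, yielding (\ref{e7}), the evaluation formula, and (\ref{e8}). You even supply two details the paper leaves implicit---the sign-flip argument $f(-\x,\y,\z)=-f(\x,\y,\z)$ guaranteeing $\eta \ge 0$ at the maximizer, and the linear independence of the constraint gradients justifying the multiplier rule---so your write-up is, if anything, slightly more complete.
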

\begin{proof}     Suppose that $\A, \x, \y$ and $\z$ are represented by $a_{ijk}, x_i, y_j$ and $z_k$ under an orthogonal basis.    Then $\x\A\y\z = a_{ijk}x_iy_jz_k$, $x_ix_i = 1, y_jy_j = 1 $ and $z_kz_k = 1$, as these values are invariant with respect to orthonormal transformation.   We have
\begin{equation} \label{e9}
\max \{ \x\A\y\z : \x \bullet \x = 1, \y \bullet \y = 1, \z \bullet \z = 1 \} = \max \{ a_{ijk}x_iy_jz_k : x_ix_i = 1, y_jy_j = 1, z_kz_k = 1 \}.
\end{equation}
Consider the optimality conditions of $\max \{ a_{ijk}x_iy_jz_k : x_ix_i = 1, y_jy_j = 1, z_kz_k = 1 \}$.   We have optimizers $x_i, y_j, z_k$, Lagrangian multipliers $\eta, \eta'$ and $\eta''$ such that
$$a_{ijk}y_jz_k = \eta x_i, \ \ \ a_{ijk}x_iz_k = \eta'y_j, \ \ \ a_{ijk}x_iy_j = \eta''z_k, \ \ \  x_ix_i = 1, \ \ \ y_jy_j = 1, \ \ \ z_kz_k = 1.$$
Multiplying the first equality by $x_i$, the second equality by $y_j$, the third equality by $z_k$, we have
$$\eta = \eta' = \eta'' = a_{ijk}x_iy_jz_k.$$
Thus, we have
$$a_{ijk}y_jz_k = \eta x_i, \ \ \ a_{ijk}x_iz_k = \eta y_j, \ \ \ a_{ijk}x_iy_j = \eta z_k, \ \ \  x_ix_i = 1, \ \ \ y_jy_j = 1, \ \ \ z_kz_k = 1.$$
This implies (\ref{e7}), and shows the existence of singular values, and also $\eta = \x\A\y\z$ if $\eta$ is  a singular value of $\A$, with associated left, central and right singular vector $\x, \y$ and $\z$.   By (\ref{e9}), we have (\ref{e8}).
\end{proof}

We may regard $\x\A\y\z$ as a potential of $\A$.   Then (\ref{e8}) shows that the largest singular value of $\A$ gives the largest value of this potential.

\medskip

C-eigenvalues were introduced in \cite{CJQ-17} for third order right-side symmetric tensors. Also see \cite{LL-17}.  We may extend them to general third order tensors.   Here, ``C'' was named after Curier brothers.

Let $\A \in T(\HH),  \x, \y,  \in \HH$ and $\mu \in \Re$.    We say that $\mu$ is a C-eigenvalue of $\A$, $\x$ and $\y$ are associated left and right C-eigenvectors respectively if $\mu \ge 0$ and the following equations hold.
\begin{equation} \label{e10}
\A\y\y = \mu \x, \ \ \ \x\A\y = \mu \y, \ \ \ \x \bullet \x = 1, \ \ \  \y \bullet \y = 1.
\end{equation}
Similarly, we can prove the following theorem.

\begin{Theorem} \label{t5}   Suppose that $\A \in T(\HH)$ is right-side symmetric.   Then $\A$ always has C-eigenvalues.  They are invariants of $\A$.    Let $\mu$ be a C-eigenvalue of $\A$, with associated left and right C-eigenvectors $\x$ and $\y$.    Then $\mu = \x\A\y\y$.   Let $\mu_1$ be the maximum C-eigenvalue of $\A$.    Then
\begin{equation} \label{e11}
\mu_1 = \max \{ \x\A\y\y : \x \bullet \x = 1, \y \bullet \y = 1 \}.
\end{equation}
Comparing (\ref{e11}) with (\ref{e8}), for a third-order right-side symmetric tensor $\A$, we have $\mu_1 \le \eta_1$.
\end{Theorem}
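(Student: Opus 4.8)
The plan is to mirror the proof of Theorem \ref{t4}, working in a fixed orthonormal basis where $\x\A\y\y = a_{ijk}x_iy_jy_k$ while the constraints $x_ix_i = 1$ and $y_jy_j = 1$ are basis-independent; consequently the value of this objective at any feasible point, and in particular its maximum, is an orthonormal invariant, which delivers the invariance claim for free. First I would note that the objective is continuous on the compact set $\{x_ix_i = 1\}\times\{y_jy_j = 1\}$, so a maximizer exists, and that the maximum is nonnegative, because replacing $\x$ by $-\x$ negates the objective, so at least one of the two (feasible) sign choices is nonnegative.

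Next I would write the first-order optimality conditions at a maximizer. Differentiating the Lagrangian in $x_i$ yields $a_{ijk}y_jy_k = \mu x_i$, that is $\A\y\y = \mu\x$, for a multiplier $\mu$ (after absorbing the factor coming from the constraint gradient). Differentiating in $y_p$ is the one delicate step: because $\y$ occurs twice in the objective, the derivative is $a_{ipk}x_iy_k + a_{ijp}x_iy_j$, and here the hypothesis of right-side symmetry $a_{ijk} = a_{ikj}$ is essential. It forces the two terms to coincide, collapsing their sum to $2a_{ipk}x_iy_k$ and giving $a_{ipk}x_iy_k = \nu y_p$, i.e. $\x\A\y = \nu\y$. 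I expect this index bookkeeping, together with the accompanying factor of two, to be the main (though mild) obstacle: without right-side symmetry the $\y$-gradient would mix $\x\A\y$ with $\x\y\A$ and would not reduce to a single C-eigenvalue equation.

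I would then reconcile the two multipliers by contraction. Multiplying $\A\y\y = \mu\x$ by $\x$ gives $\x\A\y\y = \mu$, while multiplying $\x\A\y = \nu\y$ by $\y$ gives $\x\A\y\y = \nu$; hence $\mu = \nu = \x\A\y\y \ge 0$, so the maximizer together with this common value satisfies all of (\ref{e10}). This simultaneously proves that $\A$ has a C-eigenvalue and, by applying the same contraction to an arbitrary solution of (\ref{e10}), that every C-eigenvalue equals $\x\A\y\y$. Identifying the largest C-eigenvalue with the optimal value then yields (\ref{e11}): the maximizer produces a C-eigenvalue equal to the maximum, and conversely every C-eigenvalue equals $\x\A\y\y$ for some feasible pair and so cannot exceed the maximum.

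Finally, for the comparison $\mu_1 \le \eta_1$ I would observe that setting $\z = \y$ in the singular-value objective $\x\A\y\z$ of (\ref{e8}) recovers exactly the C-eigenvalue objective $\x\A\y\y$. Thus (\ref{e11}) maximizes the same objective as (\ref{e8}) but over the smaller feasible set cut out by the restriction $\z = \y$; since a maximum over a subset cannot exceed the maximum over the whole set, $\mu_1 \le \eta_1$ follows immediately.
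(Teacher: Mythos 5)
Your proposal is correct and follows essentially the same route as the paper, which proves Theorem \ref{t5} by the remark ``similarly'' to Theorem \ref{t4}: pass to a fixed orthonormal basis, maximize $a_{ijk}x_iy_jy_k$ over the product of unit spheres, and read off (\ref{e10}) from the Lagrangian optimality conditions, with $\mu_1 \le \eta_1$ obtained by restricting $\z = \y$ in (\ref{e8}). You in fact make explicit the one step the paper leaves implicit, namely that right-side symmetry collapses the two terms of the $\y$-gradient into $2a_{ipk}x_iy_k$ so that a single C-eigenvalue equation results.
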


In \cite{CJQ-17}, it was pointed out that if $\A$ is the piezoelectric tensor, then $\mu_1$ is the highest piezoelectric coupling constant.

\medskip

Z-eigenvalues were introduced in \cite{Qi-05} for symmetric tensors.   They were extended to nonsymmetric tensors in \cite{Qi-07}. We now discuss them in the context of third order tensors.

Let $\A \in T(\HH),  \x,  \in \HH$ and $\nu \in \Re$.    We say that $\nu$ is a Z-eigenvalue of $\A$, $\x$ is an associated Z-eigenvectors respectively if $\nu \ge 0$ and the following equations hold.
\begin{equation} \label{e12}
\A\x\x = \nu \x, \ \ \ \x \bullet \x = 1.
\end{equation}
Similarly, we can prove the following theorem.

\begin{Theorem} \label{t6}   Suppose that $\A \in T(\HH)$ is symmetric.   Then $\A$ always has Z-eigenvalues.  They are invariants of $\A$.    Let $\nu$ be a Z-eigenvalue of $\A$, with associated Z-eigenvector $\x$.    Then $\nu = \x\A\x\x$.   Let $\nu_1$ be the maximum Z-eigenvalue of $\A$.    Then
\begin{equation} \label{e13}
\nu_1 = \max \{ \x\A\x\x : \x \bullet \x = 1 \}.
\end{equation}
Comparing (\ref{e13}) with (\ref{e8}) and (\ref{e11}), for a third-order symmetric tensor $\A$, we have $\nu_1 \le \mu_1 \le \eta_1$.  If $\A$ is symmetric, then $\nu_1 = \mu_1$.
\end{Theorem}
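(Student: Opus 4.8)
The plan is to mirror the proof of Theorem \ref{t4}: reduce to a constrained maximization, read off the Z-eigenvalue equation from the first-order conditions, and then obtain the comparison $\nu_1 \le \mu_1 \le \eta_1$ by restricting feasible sets. Throughout I work with a representative hypermatrix $a_{ijk}$ of $\A$ under an orthonormal basis; since $\A$ is symmetric, $a_{ijk}$ is invariant under every permutation of its indices, and both $\x\A\x\x = a_{ijk}x_ix_jx_k$ and $\x \bullet \x = x_ix_i$ are independent of the choice of basis.

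I would first record the formula $\nu = \x\A\x\x$: contracting the defining equation $\A\x\x = \nu\x$ in (\ref{e12}) with $\x$ and using $\x \bullet \x = 1$ gives $\x\A\x\x = \nu(\x \bullet \x) = \nu$ for every Z-eigenpair. For existence and (\ref{e13}), note that $\x\A\x\x$ is continuous on the compact unit sphere $\{\x : \x \bullet \x = 1\}$, so a maximizer exists. Writing the Lagrangian $a_{ijk}x_ix_jx_k - \nu'(x_ix_i - 1)$ and using full symmetry to collapse the three equal terms in $\partial_l(a_{ijk}x_ix_jx_k) = 3a_{ljk}x_jx_k$, the stationarity condition reads $a_{ljk}x_jx_k = \nu x_l$ after absorbing the numerical factor into $\nu$; this is exactly $\A\x\x = \nu\x$. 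Because the cubic form changes sign under $\x \mapsto -\x$, its maximum over the sphere is nonnegative, so the maximizer yields a genuine Z-eigenvalue and $\nu_1 = \max\{\x\A\x\x : \x \bullet \x = 1\}$, which is (\ref{e13}). Invariance of the Z-eigenvalues is immediate, since both (\ref{e12}) and (\ref{e13}) are phrased in basis-free tensor quantities.

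The comparison inequalities follow from inclusion of feasible sets. The $\nu_1$-problem is the $\mu_1$-problem of (\ref{e11}) restricted to the diagonal $\y = \x$, so maximizing over the larger set gives $\nu_1 \le \mu_1$; likewise $\mu_1 \le \eta_1$ is the inequality already recorded in Theorem \ref{t5}, obtained by restricting the $\eta_1$-problem of (\ref{e8}) to $\z = \y$. This yields $\nu_1 \le \mu_1 \le \eta_1$.

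The main obstacle is the final equality $\nu_1 = \mu_1$; here a naive polarization bound is not sharp enough (it only yields $\mu_1 \le c\,\nu_1$ with $c > 1$), so symmetry of $\A$ must be used more carefully. I would establish the stronger identity $\nu_1 = \eta_1$ and then sandwich. Because the trilinear form is odd in each argument, each of $\nu_1, \mu_1, \eta_1$ equals the maximum of the modulus of the corresponding form over its (diagonal, partially diagonal, or full) product of unit spheres, and $\eta_1 = \nu_1$ is precisely the statement that a symmetric trilinear form attains the maximum of its modulus over a product of unit spheres on the diagonal. This is Banach's classical theorem on homogeneous polynomials on a real Hilbert space; granting it, $\nu_1 = \eta_1$, and together with $\nu_1 \le \mu_1 \le \eta_1$ this forces $\nu_1 = \mu_1 = \eta_1$, in particular $\nu_1 = \mu_1$. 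The delicate point is exactly this symmetric-diagonalization step: a self-contained alternative is to analyze the stationarity relations $\A\y\y = \mu\x$ and $\x\A\y = \mu\y$ at an optimal C-eigenpair and show that the optimal value is also attained at a single repeated unit vector, which is where the assumption that $\A$ is symmetric is indispensable.
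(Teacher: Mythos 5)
Your proposal is correct and, for everything except the final equality, coincides with the paper's own proof: the paper establishes existence, the formula $\nu = \x\A\x\x$, invariance, and (\ref{e13}) exactly by mirroring the Lagrange-multiplier argument of Theorem \ref{t4}, and obtains $\nu_1 \le \mu_1 \le \eta_1$ by the same feasible-set inclusions you use. Where you genuinely diverge is the crux $\nu_1 = \mu_1$: the paper disposes of it with a single citation, Theorem 2.2 of \cite{ZQY-12}, whereas you route through the stronger identity $\nu_1 = \eta_1$ via Banach's classical 1938 theorem that a symmetric multilinear form on a real Hilbert space attains the maximum of its modulus over a product of unit spheres on the diagonal, and then sandwich to get $\nu_1 = \mu_1 = \eta_1$. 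The two key lemmas are mathematically cognate --- the result cited from \cite{ZQY-12} is essentially Banach's theorem specialized to cubic spherical optimization --- but your version buys a slightly stronger stated conclusion (all three maxima coincide for a fully symmetric $\A$), and you correctly handle the one subtlety Banach's statement requires: since the trilinear form is odd in each argument, each of $\nu_1, \mu_1, \eta_1$ equals the maximum of the corresponding modulus, so the modulus formulation applies. Your remark that naive polarization only yields $\mu_1 \le c\,\nu_1$ with $c > 1$ is exactly right and is why some Banach/\cite{ZQY-12}-type theorem (or the direct stationarity analysis you sketch as a fallback) is indispensable at this step; the paper simply chooses to import it as a black box rather than name the underlying classical result.
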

\begin{proof}   The other conclusions can be proved similarly as the proof of Theorem \ref{t4}.   The last conclusion that if $\A$ is symmetric, then $\nu_1 = \mu_1$, follows from Theorem 2.2 of \cite{ZQY-12}.
\end{proof}

As the proof of the above theorem shows, when $\A$ is symmetric, the content of this section is related with the discussion on cubic spherical optimization problems in \cite{ZQY-12}, where some computational method also can be used for computing $\nu_1$, $\mu_1$ and $\eta_1$ here.

By \cite{Qi-05}, the complex version of Z-eigenvalues are called E-eigenvalues.   By \cite{CS-13}, a third order three dimensional generic symmetric hypermatrix has seven E-eigenvalues.   Thus, we made the conjecture that a third order symmetric tensor $\A \in T(\HH)$ has seven independent invariants.

\section{The Levi-Civita Tensor and Related Third Order Tensors}

We now study more about the Levi-Civita tensor $\E$.    We have the following theorem.

\begin{Theorem} \label{t7}  The Levi-Civita Tensor $\E$ is nonsingular.   Its kernel tensor is $2\I$, $\E^{-1} = {1 \over 2}\E$, and its three L-eigenvalues are $\sigma_1 = \sigma_2 = \sigma_3 = \sqrt{2}$.   Let $\z \in \HH$. If $\U = \E\z$, then $\z = {1 \over 2}\E\U$.   Furthermore, ${1 \over \sqrt{2}}\E$ is a third order orthogonal tensor.
\end{Theorem}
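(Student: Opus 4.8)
The plan is to reduce every assertion to the single computation that the kernel tensor of $\E$ equals $2\I$, after which each conclusion follows from a result already proved. First I would fix the orthonormal basis under which $\E$ is represented by $\epsilon_{ijk}$ and record that $\E$ is cyclically symmetric, i.e. $\E^\T = \E$; this is immediate because a cyclic permutation of the indices is an even permutation and hence leaves $\epsilon_{ijk}$ invariant, and it has already been noted in the text. Consequently the representative hypermatrix of $\E^\T$ is again $\epsilon_{ijk}$.

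Next I would compute the kernel tensor $\U = \E\E^\T$. By operation 13 its representative matrix is $u_{il} = \epsilon_{ijk}(\E^\T)_{jkl} = \epsilon_{ijk}\epsilon_{jkl}$. The key step is the Levi-Civita contraction identity: rewriting $\epsilon_{jkl} = \epsilon_{ljk}$ by cyclic symmetry, one gets $u_{il} = \epsilon_{ijk}\epsilon_{ljk} = 2\delta_{il}$, so that $\U = 2\I$. This is the one genuinely computational point, and I expect it to be the main (though modest) obstacle: it requires getting the index bookkeeping of the transpose and of the product in operation 13 exactly right, together with the standard summation $\epsilon_{ijk}\epsilon_{ljk} = 2\delta_{il}$ (summed over $j,k$).

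Everything else is then a matter of invoking the earlier theorems. Since $\U = 2\I$ is positive definite, Theorem \ref{t1} yields that $\E$ is nonsingular. Because $\E\E^\T = 2\I = \alpha\I$ with $\alpha = 2 > 0$, the last part of Theorem \ref{t2} gives $\E^{-1} = \frac{1}{\alpha}\E^\T = \frac{1}{2}\E$, using $\E^\T = \E$, and Conclusion 7 of Theorem \ref{t3} gives the three L-eigenvalues $\sigma_1 = \sigma_2 = \sigma_3 = \sqrt{\alpha} = \sqrt{2}$. For the orthogonality statement, set $\A = \frac{1}{\sqrt{2}}\E$; by linearity of the transpose and bilinearity of the second order tensor product, $\A\A^\T = \frac{1}{2}\E\E^\T = \I$, which is exactly the definition of a third order orthogonal tensor.

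It remains to verify the recovery formula. Given $\U = \E\z$ (operation 2, $u_{ij} = \epsilon_{ijk}z_k$), I would contract against $\E$ via operation 4: $\epsilon_{ijk}u_{jk} = \epsilon_{ijk}\epsilon_{jkm}z_m$. Rewriting $\epsilon_{jkm} = \epsilon_{mjk}$ and applying the same contraction identity $\epsilon_{ijk}\epsilon_{mjk} = 2\delta_{im}$ gives $\epsilon_{ijk}u_{jk} = 2z_i$, i.e. $\z = \frac{1}{2}\E\U$. Alternatively this is just the L-inverse recovery property applied with $\E^{-1} = \frac{1}{2}\E$. No further steps are needed.
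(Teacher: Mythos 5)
Your proposal is correct and follows essentially the same route as the paper's own proof: both rest on the contraction identity $\epsilon_{ijk}\epsilon_{jkl} = 2\delta_{il}$ together with $\E^\T = \E$ to get the kernel tensor $\E\E^\T = 2\I$, and then invoke Theorem \ref{t2} for $\E^{-1} = \frac{1}{2}\E$, Conclusion 7 of Theorem \ref{t3} for $\sigma_1 = \sigma_2 = \sigma_3 = \sqrt{2}$, and the definition of a third order orthogonal tensor for $\frac{1}{\sqrt{2}}\E$. Your only (harmless) deviations are deducing nonsingularity from the positive definiteness of the kernel tensor via Theorem \ref{t1} rather than from Theorem \ref{t2}, and verifying the recovery formula $\z = \frac{1}{2}\E\U$ by a direct index contraction where the paper simply cites Proposition \ref{p1}.
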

\begin{proof}   We now that $\E$ is cyclically symmetric, $\E^\T = \E$.   By direct calculation, we have $\epsilon_{ijk}\epsilon_{jkl} = 2\delta_{il}$.  Thus, $\E\E^\T = 2\I$. By Theorem \ref{t2}, $\E$ is nonsingular, and $\E^{-1} = {1 \over 2}\E$. By Proposition \ref{p1}, if $\U = \E\z$, then $\z = {1 \over 2}\E\U$.  Since $\E\E^\T = 2\I$, by Theorem \ref{t3}, the L-eigenvalues of $\E$ are $\sigma_1 = \sigma_2 = \sigma_3 = \sqrt{2}$.    Since ${1 \over \sqrt{2}}\E({1 \over \sqrt{2}}\E)^\T = \I$, by definition, ${1 \over \sqrt{2}}\E$ is a third order orthogonal tensor.
\end{proof}

Now, a question is: what is the largest singular value of $\E$?

\medskip

In Section 3, we see that right-side symmetry and left-side symmetry are tensor properties.    We may now define some tensor symmetry properties by using the Levi-Civita tensor.      Let $a_{ijk}$ be a hypermatrix.   We say that the hypermatrix $a_{ijk}$ is selectively right-side symmetric if $a_{ijk} = a_{ikj}$ for $k \not = i \not = j$.    Similarly, we say that the hypermatrix $a_{ijk}$ is selectively left-side symmetric if $a_{ijk} = a_{jik}$ for $i \not = k \not = j$.

\begin{Proposition} \label{p4}
Let $\A \in T(\HH)$.   Then the representative hypermatrix of $\A$ under an orthonormal basis is selectively right-side symmetric if and only if $\A\E = \OO$, and the representative hypermatrix of $\A$ under an orthonormal basis is selectively left-side symmetric if and only if $\E\A = \OO$.  Thus, the selectively right-side symmetric property and the selectively left-side symmetric property are tensor properties.
\end{Proposition}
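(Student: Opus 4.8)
The plan is to reduce both equivalences to explicit component computations in a fixed orthonormal basis, and then to promote the conclusion to a statement about tensors by invoking the basis-independence of the second order tensor product and of the zero tensor.

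First I would fix an orthonormal basis, represent $\A$ by $a_{ijk}$ and $\E$ by $\epsilon_{ijk}$, and compute the representative matrix of the second order tensor product $\A\E$. By the product rule $u_{il} = a_{ijk}b_{jkl}$, we have $(\A\E)_{il} = a_{ijk}\epsilon_{jkl}$. Inserting the six nonzero entries of $\epsilon_{jkl}$ and collecting terms for each value of $l$ gives
\begin{equation*}
(\A\E)_{i1} = a_{i23} - a_{i32}, \qquad (\A\E)_{i2} = a_{i31} - a_{i13}, \qquad (\A\E)_{i3} = a_{i12} - a_{i21}.
\end{equation*}
Thus $\A\E = \OO$ holds exactly when $a_{ijk} = a_{ikj}$ on the selected index combinations, which is precisely the defining condition of selective right-side symmetry. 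A cleaner bookkeeping route is to observe that $a_{ijk}\epsilon_{jkl}$ annihilates the part of $a_{ijk}$ symmetric in $(j,k)$, so only the antisymmetric part contributes; passing to its dual vector and using $\epsilon_{jkm}\epsilon_{jkl} = 2\delta_{ml}$ identifies $(\A\E)_{il}$ (up to the factor $2$) with the dual of the $(j,k)$-antisymmetric part of $\A$, which vanishes if and only if $\A$ is selectively right-side symmetric.

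The left-side statement is handled identically. Here $\E\A$ has components $(\E\A)_{il} = \epsilon_{ijk}a_{jkl}$, and the same enumeration of nonzero $\epsilon$ entries yields $(\E\A)_{1l} = a_{23l} - a_{32l}$ together with its cyclic companions for $i=2,3$. Hence $\E\A = \OO$ is equivalent to $a_{ijk} = a_{jik}$ on the selected index combinations, i.e. to selective left-side symmetry.

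Finally I would draw the tensor-property conclusion. Both $\A\E$ and $\E\A$ are second order tensors, and a second order tensor equals $\OO$ in one orthonormal basis if and only if it equals $\OO$ in every orthonormal basis. Therefore the conditions $\A\E = \OO$ and $\E\A = \OO$ are basis-free, and so are the selective symmetry properties they characterize. The one point deserving a remark is the sign ambiguity of $\E$, whose representative hypermatrix is $\epsilon_{ijk}$ in a right-handed basis and $-\epsilon_{ijk}$ in a left-handed one; this is harmless, since $\A\E = \OO$ if and only if $\A(-\E) = \OO$, so the component equivalence above holds verbatim in every orthonormal basis regardless of handedness. The main (though modest) obstacle is simply the careful index bookkeeping in the two component computations; the conceptual payoff — that a seemingly basis-dependent pattern of entries is in fact a genuine tensor property — comes for free once the products are recognized as $\A\E$ and $\E\A$.
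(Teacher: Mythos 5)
Your proposal is correct in its computations and follows essentially the same route as the paper: the paper's entire proof is the assertion that ``by direct calculation'' selective right-side symmetry of $a_{ijk}$ is equivalent to $a_{ijk}\epsilon_{jkl} = 0_{il}$ and selective left-side symmetry to $\epsilon_{ijk}a_{jkl} = 0_{il}$, after which the conclusion follows; you simply carry out that calculation explicitly and spell out the basis-independence step that the paper leaves implicit. Your two additions---the decomposition into $(j,k)$-symmetric and antisymmetric parts with the contraction $\epsilon_{jkm}\epsilon_{jkl} = 2\delta_{ml}$, and the remark that the sign ambiguity of the representative hypermatrix of $\E$ is harmless---are both sound and go beyond what the paper records.

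One point your explicit formulas actually expose, and which you (like the paper) pass over: the identities $(\A\E)_{i1} = a_{i23} - a_{i32}$, $(\A\E)_{i2} = a_{i31} - a_{i13}$, $(\A\E)_{i3} = a_{i12} - a_{i21}$ show that $\A\E = \OO$ forces $a_{ijk} = a_{ikj}$ for \emph{every} $i$ and every $j \neq k$, including the cases $i = j$ or $i = k$, e.g.\ $a_{112} = a_{121}$ and $a_{223} = a_{232}$. The paper's definition in Section 8 imposes $a_{ijk} = a_{ikj}$ only for $k \neq i \neq j$, which excludes precisely the six identities of the form $a_{iik} = a_{iki}$ ($i \neq k$)---the very family the paper raises as a separate open question at the end of that section. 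Under the definition as literally stated, the hypermatrix with $a_{112} = 1$ and all other entries zero is selectively right-side symmetric, yet $(\A\E)_{13} = a_{112} - a_{121} = 1 \neq 0$, so the implication from selective symmetry to $\A\E = \OO$ fails; the analogous issue affects the left-side half. Thus your sentence identifying the vanishing conditions with ``precisely the defining condition of selective right-side symmetry'' is exact only if ``selective'' is read as symmetry in the last two indices for all $j \neq k$; with the paper's literal definition it is a genuine mismatch. This wrinkle is inherited from, not introduced by, your argument---the paper's one-line proof makes the identical unexamined identification---but since your version displays the formulas, you are in a position to flag it, and an airtight write-up should either adopt the corrected reading of the definition or state the discrepancy explicitly.
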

\begin{proof}  By direct calculation, we see that a hypermatrix $a_{ijk}$ is selectively right-side symmetric if and only if $a_{ijk}\epsilon_{jkl} = 0_{il}$, $a_{ijk}$ is selectively left-side symmetric if and only if $\epsilon_{ijk}a_{jkl} = 0_{il}$.  The conclusion follows.
\end{proof}

Let $a_{ijk}$ be a hypermatrix.   Suppose that $a_{iik} = a_{iki}$ for all $i$ and $k$.  Here, double $i$ does not mean summation.   Is such a property a tensor property?

\section{Final Remarks}

In this paper, we studied third order tensors by regarding them as linear operators transforming first order tensors into second order tensors, or second order tensors into first order tensors.
From this point of view, various new concepts have been introduced, and various results and invariants have been derived.    Some questions have been raised.   We hope that we may obtain more useful results for third order tensors from this approach.

\bigskip

{\bf Acknowledgment.}  The author is thankful to Quanshui Zheng, Wennan Zou, Yannan Chen, Weiyang Ding, Diwei Shi and Jinjie Liu for the discussion.

\end{document}